\documentclass[12pt]{article}
\usepackage{amsmath, amsfonts, amssymb, amsthm, graphicx, geometry, arydshln, umoline, subfig, newtxtext, newtxmath, xcolor, comment, soul, enumerate, bm, tikz, float}
\usepackage[colorlinks=true,citecolor=blue]{hyperref}
\usepackage{natbib}
\usepackage[toc,title,page]{appendix}
\bibliographystyle{ecta}
\usetikzlibrary{positioning,calc}

\usepackage{booktabs}

\geometry{a4paper, left=0.59in, right=0.59in, top=1.0in, bottom=1.0in}

\theoremstyle{definition}
\newtheorem{theorem}{Theorem}
\newtheorem*{theorem*}{Theorem}

\newtheorem{claim}{Claim}

\newtheorem{corollary}{Corollary}

\newtheorem{example}{Example}

\newtheorem{lemma}{Lemma}

\newtheorem{remark}{Remark}




\newcommand{\mR}{\mathbb{R}}


\begin{document}

\title{Characterizing the ELS Values \\ with Fixed-Population Invariance Axioms\thanks{We thank We thank the Associate Editor and the two anonymous referees for their constructive comments, which have been helpful in improving the paper. We also thank
seminar participants at Tokyo social choice theory workshop, Waseda University, Hitotsubashi Summer Institute 2025, and I France-Spain Meeting on Microeconomic Theory at University of Granada for their helpful comments. Funaki and Nakada acknowledge  the financial supports from Japan Society for the Promotion of Science KAKENHI: No. 22H00829 (Funkai), No.25K16606 (Nakada).
Koriyama and Tamura acknowledge the financial support from Investissements d’Avenir, ANR-11-IDEX-0003/Labex Ecodec/ANR-11-LABX-0047.}
}

\author{
Yukihiko Funaki\footnote{School of Political Science and Economics, Waseda University. E-mail:\texttt{funaki@waseda.jp}.}
\and
Yukio Koriyama\footnote{CREST, Ecole Polytechnique, Institut Polytechnique de Paris. E-mail:\texttt{yukio.koriyama@polytechnique.edu}.}
\and
Satoshi Nakada\footnote{School of Management, Department of Business Economics, Tokyo University of Science. E-mail:\texttt{snakada@rs.tus.ac.jp}.}
\and Yuki Tamura\footnote{CREST, Ecole Polytechnique, Institut Polytechnique de Paris. E-mail:\texttt{yuki.tamura@polytechnique.edu}.}
}
\date{\today}
\maketitle

\begin{abstract}
We study efficient, linear, and symmetric (ELS) values, a central family of allocation rules for cooperative games with transferable-utility (TU-games) that includes the Shapley value, the CIS value, and the ENSC value. We first show that every ELS value can be written as the Shapley value of a suitably transformed TU-game. We then introduce three types of invariance axioms for fixed player populations. The first type consists of composition axioms, and the second type is active-player consistency. Each of these two types yields a characterization of a subclass of the ELS values that contains the family of least-square values. Finally, the third type is nullified-game consistency: we define three such axioms, and each axiom yields a characterization of one of the Shapley, CIS, and ENSC values.
\\
\end{abstract}

\newpage

\section{Introduction}\label{sec_intro}

Designing allocation rules that satisfy normatively appealing desiderata is a central theme in cooperative games with transferable utility (TU-games). Since \citet{Shapley1953}, a large axiomatic literature has evaluated solution concepts and proposed axioms, including many forms of \textit{invariance}. In this paper, we focus on an idea that underlies several invariance axioms. Start from a problem and an allocation selected by some rule. Fix certain components of this allocation, and construct a reduced problem by adjusting the original one to account for those fixed components. The rule is invariant if, in the reduced problem, it assigns the relevant agents exactly the same payoffs as in the original problem.

\emph{Consistency} is a prime example of such an invariance axiom.\footnote{For a comprehensive survey of consistency, see \citet{thomson2012axiomatics}.} Starting from a problem and an allocation chosen by the rule, suppose a subset of agents leaves while carrying the components assigned to them. We then construct the corresponding reduced problem for the agents who remain. The rule is consistent if, in this reduced problem, it assigns to the remaining agents exactly the same payoffs as in the original problem. This formulation captures the idea that when some players depart with their allocated shares, the assignments to those who remain should be unaffected. Consistency is typically formulated in \emph{variable-population} frameworks, which model situations in which players may exit and the problem is thereby reduced.

In many applications, however, the set of players is not meant to vary. In intra-firm negotiations, resource allocation within a community, or cost sharing within a fixed group, the relevant group of agents is given and does not change. In such \emph{fixed-population} frameworks, it is natural to ask which allocation rules satisfy analogous invariance requirements when these requirements are formulated without allowing players to leave.

This paper studies three types of fixed-population invariance axioms and identifies which efficient, linear, and symmetric allocation rules (ELS values) in TU-games satisfy them. Efficiency, linearity, and symmetry are widely regarded as foundational requirements for allocation rules. The ELS class is broad, encompassing well-known rules such as the Shapley value \citep{Shapley1953}, the egalitarian value, the center-of-imputation-set (CIS) value, and its dual, the egalitarian non-separable contribution (ENSC) value \citep{DriessenFunaki1991}. 

The first type consists of \emph{composition} axioms, which originate in the bankruptcy and surplus-sharing literature \citep{young1988jet,moulin1987ijgt}.\footnote{For a comprehensive survey of the bankruptcy problem, including results on composition axioms, see \citet{thomson2019book}.} In our companion paper \citep{fknt2025wppd}, we introduced TU-game analogues of \emph{composition up} and \emph{composition down}. When moving from bankruptcy and surplus-sharing problems to TU-games, however, the construction of the corresponding game is no longer unique. For this reason, in the present paper we distinguish two versions of composition down, namely \emph{insider-guaranteed composition down} and \emph{outsider-guaranteed composition down}. The companion paper focuses on the insider-guaranteed version, whereas the outsider-guaranteed version is introduced here. Informally, composition axioms require that a large allocation problem can be decomposed into smaller subproblems and then recomposed without changing the final outcome.

Composition up proceeds in two stages: first, choose any provisional worth for the grand coalition and allocate as if this provisional worth were the true grand-coalition worth; second, adjust by reducing each coalition’s worth by what its members received in the first stage and allocate the remainder (or cover any shortfall) so that the totals match the actual grand-coalition worth. Composition up requires that the sum of these two stages coincide with the allocation obtained by applying the allocation rule directly to the original game.

Composition down likewise yields the allocation in two stages without changing the outcome. First, choose any provisional worth for the grand coalition and compute reference payoffs as if this provisional worth were the true grand-coalition worth. Then, keeping the grand-coalition worth at its actual level, transform the game by applying one of the following operations to every proper coalition: (a) replace each coalition’s worth by the sum of its members’ reference payoffs (the \emph{insider-guaranteed} version); or (b) replace each coalition’s worth by the grand-coalition worth minus the sum of the reference payoffs of players outside the coalition (the \emph{outsider-guaranteed} version). Composition down requires that the allocation rule applied to either transformed game coincide with the allocation rule applied directly to the original game.

Within the class of ELS values, we prove that composition up, insider-guaranteed composition down, and outsider-guaranteed composition down are equivalent, and we characterize the ELS values that satisfy these equivalent axioms: they are either the egalitarian value or an affine combination of allocation rules introduced by \citet{dragan1992average} (Theorem \ref{comp ELS}).

The second axiom, \emph{active-player consistency}, considers situations in which a coalition of players actively seeks cooperation from players outside the coalition. Suppose such a coalition negotiates with these external players, who agree to cooperate on the condition that their payoffs are determined by the underlying rule. After this agreement is fixed, the remaining worth is distributed among the members of the active coalition. An allocation rule satisfies active-player consistency if each active player receives exactly the same payoff as in the original game, regardless of the intermediate bargaining with outsiders. We establish a characterization of the ELS values that satisfy active-player consistency (Theorem \ref{consistency ELS}). The resulting class coincides with the affine combinations of allocation rules identified by the composition axioms. We also show that this subfamily is closely related to the \emph{least-square values} of \cite{ruiz1998family}, which are efficient allocations that minimize a weighted variance of coalition excesses (Corollary \ref{cor:leastsquare}). Thus, our result provides a  axiomatic characterization of least-square values via active-player consistency.

The third type of properties are \emph{nullified-game consistency} axioms, introduced for set-valued solutions by \citet{kn2025el} to characterize the core in TU-games. It addresses situations in which some players’ payoffs are fixed (e.g., by binding agreements or commitments). The requirement is that, when such payoffs are fixed and the contributions of those players are neutralized, the solution for the remaining players, computed in the corresponding reduced game, coincides with their payoffs in the original game. Importantly, the player set remains unchanged; the \emph{nullified} players simply no longer affect the residual allocation problem. We study three variants, paralleling the constructions used to define variable-population consistency axioms in \citet{hart1989potential}, \citet{funaki1996wp}, and \citet{moulin1985separability}, and show that, within the ELS values, they respectively single out the Shapley, CIS, and ENSC values (Theorem \ref{NGC axiomatization}).

The remainder of the paper is organized as follows. Section \ref{sec_preliminary} introduces the model. Section \ref{sec:els} defines the ELS values and presents a representation via Shapley values of suitably modified games, yielding a characterization of all ELS values. Section \ref{sec_consistency} introduces the composition axioms and active-player consistency, and establishes a characterization result based on these axioms. Section \ref{sec_nullified} analyzes nullified-game consistency and provides characterizations of the Shapley, CIS, and ENSC values. Section \ref{sec_conclude} concludes. Omitted proofs are collected in the appendices.

\section{Preliminaries}\label{sec_preliminary}

Let $N=\{1,\ldots, n\}$ with $n \ge 2$ be the set of players. A cooperative game with transferable utility (TU-game) is a pair $(N,v)$ where the characteristic function $v:2^N\to\mathbb{R}$ satisfies $v(\emptyset)=0$.  
Throughout, we fix $N$ and identify a game with its characteristic function $v$. Let $\mathcal{V}$ denote the set of all TU-games on~$N$. A solution (or allocation rule) is a function $\varphi: \mathcal{V} \to \mR^n$ that assigns to each game $v \in \mathcal{V}$ a payoff vector $\varphi(v) = (\varphi_i(v))_{i \in N}$. 

A coalition is any nonempty subset of $N$.  For $v \in \mathcal{V}$ and $i\in N$, player~$i$ is a \textbf{null player in $\bm{v}$} if $v(S\cup\{i\})=v(S)$ for all $S\subseteq N\setminus\{i\}$. 
Let $\mathrm{Null}(v)$ denote the set of null players in $v$. Let $\Pi$ be the set of permutations of $N$, and write $\pi S:=\{\pi(i):i\in S\}$.

We define four classes of games used throughout the paper. For $\emptyset \neq T\subseteq N$, the \textbf{$\bm{T}$-unanimity game} is defined by
\begin{equation*}
u_T(S)=
\begin{cases}
1 & \text{if } T\subseteq S,\\
0 & \text{otherwise}. 
\end{cases}
\end{equation*}
For $v\in\mathcal{V}$, its \textbf{dual} $v^*\in\mathcal{V}$ is defined by
\begin{equation*}
v^*(S)=v(N)-v(N\setminus S) \ \text{for all} \ S\subseteq N.
\end{equation*}
For $x\in\mathbb{R}^n$, we write $\overline{x}=\frac{1}{n}\sum_{i\in N}x_i$.
Also, the \textbf{additive game induced by $\bm{x}$}, denoted $\widehat{x}\in\mathcal{V}$, is defined by
\begin{equation*}
\widehat{x}(S)=\sum_{i\in S}x_i \ \text{for all} \ \emptyset \neq S\subseteq N.
\end{equation*}
For $\pi \in \Pi$ and $v \in \mathcal{V}$, the \textbf{$\bm{\pi}$-permutation of $\bm{v}$}, denoted $\pi v\in\mathcal{V}$, is defined by
\begin{equation*}
\pi v(\pi S)=v(S) \ \text{for all} \ S\subseteq N.
\end{equation*}


\section{ELS values}\label{sec:els}
\subsection{Definitions}
We study solutions satisfying the following three standard axioms.

\medskip

\noindent\textbf{Efficiency (E)}: 
For any $v \in \mathcal{V}$, $\sum_{i \in N}\varphi_i(v)=v(N)$.

\noindent\textbf{Linearity (L)}: 
For any $v, w \in  \mathcal{V}$ and $c, c'\in \mR$, $\varphi(cv+c'w)= c\varphi(v)+c'\varphi(w)$.



\noindent\textbf{Symmetry (SYM)}: 
For any $\pi \in \Pi$, $v \in \mathcal{V}$, and $i \in N$, $\varphi_i(v)=\varphi_{\pi(i)}(\pi v)$.\footnote{This axiom is closely related to \emph{equal treatment of equals}: for any $v\in\mathcal{V}^N$ and $i,j\in N$, if $i$ and $j$ are symmetric in $v$, then $\varphi_i(v)=\varphi_j(v)$. In general, (SYM) is stronger than equal treatment of equals. However, for linear and efficient solutions, the two properties are equivalent; see \citet[Theorem 2]{malawski2007note}.}

\medskip

Following convention, we refer to any solution satisfying (E), (L), and (SYM) as an ELS value.
\citet{ruiz1998family} show that any ELS value can be written as an affine combination of the following solutions introduced by \citet{dragan1992average}. 
For $s\in\{1,\dots,n-1\}$, $v\in\mathcal{V}$ and $i\in N$,
\begin{eqnarray*}
\psi^s_i(v)=\frac{v(N)}{n}+\frac{n-1}{s}
\left(
\frac{\sum_{S \subseteq N: |S|=s}v(S)}{\binom{n}{s}}-\frac{\sum_{S \subseteq N: |S|=s, i \notin S}v(S)}{\binom{n-1}{s}} 
\right),    
\end{eqnarray*}
and for $s = n$, 
\begin{equation*}
\psi^n_i(v)=\frac{v(N)}{n} = ED_i(v).
\end{equation*}

In particular, $\psi^{1}$ coincides with the CIS value and $\psi^{\,n-1}$ coincides with its dual, the ENSC value. Both were introduced by \citet{DriessenFunaki1991} and are defined as follows: for any $v \in \mathcal{V}$ and $i \in N$,
\begin{equation*}
CIS_i(v)=v(\{i\})+\frac{1}{n}\!\left( v(N)-\sum_{k \in N}v(\{k\}) \right),
\end{equation*}
and
\begin{equation*}
ENSC_i(v)=CIS_i(v^*)
= v(N)-v(N \setminus \{i\})+\frac{1}{n}\!\left( v(N)-\sum_{k \in N}\left(v(N)-v(N \setminus \{k\}) \right) \right).
\end{equation*}

Moreover, \citet{dragan1992average} shows that 
$\varphi(v)=\frac{1}{n-1}\sum_{s=1}^{n-1}\psi^s(v)$
coincides with the Shapley value \citep{Shapley1953}: for any $v \in \mathcal{V}$ and $i \in N$,
\begin{equation*}
Sh_i(v)=\sum_{S \subseteq N: i \notin S}\frac{|S|!(n-|S|-1)!}{n!}\left(v(S \cup \{i\})-v(S)\right).
\end{equation*}

For later use, we state a well-known representation of the ELS values. For derivations of the lemma, see \cite{weber1988}, \citet{ruiz1998family}, \citet{Nakada2024}, or \citet{fk2025ijgt}.

\begin{lemma}\label{Linear}
\begin{itemize}\itemsep-0.1cm
\item[]
\item[(i)] $\varphi$ satisfies (L) if and only if, for any $i \in N$, there are constants $(p_i(S))_{S \subseteq N} \in \mathbb{R}^{2^n-1}$ such that for any $v \in \mathcal{V}$,
\[
\varphi_i(v)=\sum_{S \subseteq N}p_i(S)v(S).
\]
\item[(ii)] $\varphi$ satisfies (L) and (SYM) if and only if there are constants $(p_k)_{k=1}^n \in \mathbb{R}^n$ and $(q_k)_{k=1}^{n-1} \in \mathbb{R}^{n-1}$ such that for any $v \in \mathcal{V}$ and $i \in N$,
\[
\varphi_i(v)=\sum_{S \subseteq N: |S| = s,\hspace{0.05cm} i \in S}   p_sv(S)+\sum_{S \subseteq N: |S| = s,\hspace{0.05cm} i \notin S}   q_sv(S).
\]
\item[(iii)] $\varphi$ is an ELS value if and only if the coefficients in (ii) satisfy $p_n=\frac{1}{n}$ and $q_k=-\frac{k}{n-k}p_k$ for all $k=1,\ldots, n-1$.
\end{itemize}
\end{lemma}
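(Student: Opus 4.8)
The plan is to treat $\mathcal{V}$ as the real vector space of all set functions $v:2^N\to\mathbb{R}$ with $v(\emptyset)=0$; this space has dimension $2^n-1$, with natural basis $\{e_S : \emptyset\neq S\subseteq N\}$, where $e_S(T)=1$ if $T=S$ and $e_S(T)=0$ otherwise. Each part then reduces to elementary linear algebra together with a symmetry argument. For part (i), I would observe that (L) is exactly the statement that the map $\varphi:\mathcal{V}\to\mathbb{R}^n$ is linear, hence each component $\varphi_i:\mathcal{V}\to\mathbb{R}$ is a linear functional. Writing $v=\sum_{\emptyset\neq S\subseteq N}v(S)\,e_S$ and applying linearity gives $\varphi_i(v)=\sum_{S}v(S)\,\varphi_i(e_S)$, so setting $p_i(S):=\varphi_i(e_S)$ yields the claimed representation; the converse is immediate since $v\mapsto\sum_S p_i(S)v(S)$ is visibly linear in $v$.

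For part (ii), I would first record the identity $\pi e_S=e_{\pi S}$, which follows directly from the definition of the $\pi$-permutation of a game. Assuming (L) so that part (i) applies, I would then feed the basis game $e_S$ into (SYM): $p_i(S)=\varphi_i(e_S)=\varphi_{\pi(i)}(\pi e_S)=\varphi_{\pi(i)}(e_{\pi S})=p_{\pi(i)}(\pi S)$ for every $\pi\in\Pi$. Thus $p_i(S)$ is constant on the orbits of pairs $(i,S)$ under the diagonal action of $\Pi$. Since the symmetric group acts transitively on pairs $(i,S)$ with a prescribed value of $|S|$ and a prescribed membership status of $i$ (namely $i\in S$ or $i\notin S$), the coefficient $p_i(S)$ depends only on $|S|$ and on whether $i\in S$. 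Defining $p_s:=p_i(S)$ when $i\in S$ and $|S|=s$ (for $s=1,\dots,n$) and $q_s:=p_i(S)$ when $i\notin S$ and $|S|=s$ (for $s=1,\dots,n-1$) gives the representation in (ii); the ranges of $s$ follow from the facts that $i\in S$ forces $1\le|S|\le n$ while $i\notin S$ with $S\neq\emptyset$ forces $1\le|S|\le n-1$. The converse direction is again routine.

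For part (iii), I would impose (E) on the representation from (ii) and compare coefficients of each $v(S)$. Fixing $S$ with $|S|=s$, every $i\in S$ contributes $p_s$ and every $i\notin S$ contributes $q_s$ to $\sum_{i\in N}\varphi_i(v)$, so the coefficient of $v(S)$ equals $s\,p_s+(n-s)\,q_s$. Since (E) requires $\sum_{i\in N}\varphi_i(v)=v(N)$ identically in $v$, matching coefficients gives $n\,p_n=1$ from $S=N$ and $s\,p_s+(n-s)\,q_s=0$ for every proper coalition, that is, $p_n=\tfrac{1}{n}$ and $q_s=-\tfrac{s}{n-s}p_s$ for $s=1,\dots,n-1$. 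Conversely, these relations make the coefficient of $v(N)$ equal to $1$ and all other coefficients vanish, so (E) holds, while (L) and (SYM) are inherited from (ii); hence $\varphi$ is an ELS value.

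I expect no serious obstacle here, as every step is either elementary linear algebra or a direct coefficient computation. The only point requiring genuine care is the transitivity claim in part (ii): that $\Pi$ acts transitively on pairs $(i,S)$ once $|S|$ and the membership of $i$ are fixed. This is what collapses the a priori $n\,(2^n-1)$ coefficients of part (i) down to the $2n-1$ parameters $(p_s)_{s=1}^{n}$ and $(q_s)_{s=1}^{n-1}$, and it is the structural heart of the lemma.
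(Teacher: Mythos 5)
Your proof is correct and complete: part (i) is the basis expansion $p_i(S)=\varphi_i(e_S)$, part (ii) collapses the coefficients via the identity $\pi e_S = e_{\pi S}$ and transitivity of the symmetric group on pairs $(i,S)$ with fixed $|S|$ and membership status, and part (iii) matches coefficients under (E). The paper itself omits the proof of this lemma and defers to the cited references (\citealp{weber1988}, \citealp{ruiz1998family}), where the argument is exactly this standard one, so your approach coincides with the intended derivation.
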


\subsection{ELS Values as Shapley Variants}\label{sec_sh}

Among the ELS values, the Shapley value is the best known example. More generally, ELS values can be viewed as variants of the Shapley value. We briefly recall this connection, which is essentially contained in \citet[Proposition 2]{radzikdriessen2013mss}.

Following \citet{yokote2016new}, for $v \in \mathcal{V}$ and $\sigma:\{1,2,\ldots,n\}\to\mathbb{R}$, define $v_\sigma \in \mathcal{V}$ by $v_\sigma(S):=\sigma(|S|)v(S)$
for all $\emptyset \neq S\subseteq N$. The associated \emph{$\sigma$-Shapley value} is defined by $\sigma\text{-}Sh(v):=Sh(v_\sigma)$.
By construction, every $\sigma$-Shapley value satisfies (L) and (SYM), and if $\sigma(n)=1$, then it also satisfies (E).

This class encompasses several variants of the Shapley value studied in the literature. For example, if $\sigma(s)=\delta^{\,n-s}$ for some $\delta\in[0,1]$, then $\sigma\text{-}Sh$ coincides with the $\delta$-discounted Shapley value \citep{Joosten1996thesis,driessen2002weighted}. \citet{yokote2018balanced} introduce the $\mathbf{r}$-egalitarian Shapley values in a variable-population framework. Although their setting differs from ours, the $\sigma$-Shapley values with $\sigma(n)=1$ are closely analogous to that class.

As an immediate consequence of Lemma~\ref{Linear}, the class of $\sigma$-Shapley values is the subclass of solutions satisfying (L) and (SYM) described below.
\begin{lemma}\label{SYML sigma-sh}
A solution $\varphi$ satisfies (L) and (SYM), and its coefficients in the representation of Lemma~\ref{Linear}(ii) satisfy $q_k=-\frac{k}{n-k}p_k$ for all $k=1,\ldots,n-1$, if and only if there exists $\sigma:\{1,\ldots,n\}\to\mathbb{R}$ such that $\varphi=\sigma\text{-}Sh$.
\end{lemma}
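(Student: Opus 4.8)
The plan is to argue entirely at the level of the coefficient data supplied by Lemma~\ref{Linear}, reducing the claim to a coefficient-matching problem. I would first record a uniqueness remark: by part~(i) a solution satisfying (L) is determined by the constants $p_i(S)$ (its values on the indicator basis of $\mathcal{V}$), and by part~(ii) these collapse under (SYM) to the data $(p_k)_{k=1}^n$ and $(q_k)_{k=1}^{n-1}$. Consequently, two solutions that both satisfy (L) and (SYM) coincide if and only if they carry the same coefficients $(p_k,q_k)$, so it suffices to match coefficients.

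The first substantive step is to rewrite the Shapley value in the form of Lemma~\ref{Linear}(ii). Expanding $Sh_i(v)=\sum_{S\not\ni i}\frac{|S|!(n-|S|-1)!}{n!}\bigl(v(S\cup\{i\})-v(S)\bigr)$ and regrouping the sum by the coalition $T$ on which $v$ is evaluated, I would read off the coefficients
\[
p^{Sh}_{s}=\frac{(s-1)!(n-s)!}{n!}\ \ (i\in T),\qquad q^{Sh}_{s}=-\frac{s!(n-s-1)!}{n!}\ \ (i\notin T),
\]
with $s=|T|$. A one-line cancellation then yields $q^{Sh}_s=-\frac{s}{n-s}\,p^{Sh}_s$, and I would note that $p^{Sh}_s>0$ for every $s\in\{1,\dots,n\}$.

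Because $v_\sigma(S)=\sigma(|S|)v(S)$, passing from $v$ to $v_\sigma$ merely rescales each coefficient of $v(T)$ by the factor $\sigma(|T|)$. Hence $\sigma\text{-}Sh=Sh(v_\sigma)$ has coefficients $p_s=\sigma(s)p^{Sh}_s$ and $q_s=\sigma(s)q^{Sh}_s$, and this settles the ``if'' direction at once: the common rescaling leaves the ratio $q_s/p_s=-\frac{s}{n-s}$ unchanged, so every $\sigma\text{-}Sh$ meets the required relation (and it satisfies (L) and (SYM) as already observed). For the converse, given $\varphi$ whose coefficients obey $q_k=-\frac{k}{n-k}p_k$, I would define $\sigma(s):=p_s/p^{Sh}_s=s\binom{n}{s}p_s$, which is legitimate precisely because $p^{Sh}_s\neq0$. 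Then the $i\in T$ coefficients of $\sigma\text{-}Sh$ equal $p_s$ by construction, while its $i\notin T$ coefficients equal $-\frac{s}{n-s}p_s$ by the previous display; these agree with those of $\varphi$ (the latter by hypothesis), so $\varphi$ and $\sigma\text{-}Sh$ have identical coefficient data and coincide by the uniqueness remark.

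The only genuine calculation is the combinatorial regrouping that produces $p^{Sh}_s$ and $q^{Sh}_s$; after that the argument is pure bookkeeping. I expect the conceptual crux, rather than any real obstacle, to be the recognition that imposing $q_k=-\frac{k}{n-k}p_k$ deletes exactly one family of free parameters and leaves a single scalar per coalition size, which is precisely the one degree of freedom that $\sigma(s)$ contributes, together with the fact that the Shapley coefficients never vanish, so that $\sigma$ can always be solved for.
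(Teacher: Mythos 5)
Your proposal is correct and takes essentially the same route as the paper's own proof: both arguments work through the coefficient representation of Lemma~\ref{Linear}(ii), observe that $\sigma\text{-}Sh$ simply rescales the Shapley coefficients size-by-size, and in the converse direction recover $\sigma$ by the same formula, since your $\sigma(s)=s\binom{n}{s}p_s$ coincides with the paper's $\sigma(s)=n\binom{n-1}{s-1}p_s$, well defined exactly because the Shapley coefficients never vanish. The only difference is presentational: you isolate the coefficient-matching step as an explicit uniqueness remark, whereas the paper carries out the same computation inline.
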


We provide a proof of Lemma~\ref{SYML sigma-sh} in Appendix \ref{appendix: sec_sh}. 
Combining Lemmas~\ref{Linear} and~\ref{SYML sigma-sh}, we obtain the following characterization, established by \citet[Proposition 2]{radzikdriessen2013mss}.

\begin{theorem*}[\citealp{radzikdriessen2013mss}, Proposition 2]
A solution $\varphi$ is an ELS value if and only if there exists $\sigma:\{1,2,\ldots,n\}\to\mathbb{R}$ with $\sigma(n)=1$ such that $\varphi=\sigma\text{-}Sh$.
\end{theorem*}

\section{Composition Axioms and Active-Player Consistency}\label{sec_consistency}
In this section, we characterize a subclass of the ELS values using two types of invariance axioms: (i) three \emph{composition} axioms and (ii) \emph{active-player consistency}.

\subsection{Composition Axioms}\label{sec:composition}
In cooperative games, the worth of the grand coalition can change even after an initial allocation, for example, when available resources expand or contract. Composition axioms take that initial allocation as a reference and decompose the game into a baseline component and a residual component; operationally, this yields the same final payoffs as reallocating on the updated grand coalition worth. These axioms were first developed for the bankruptcy problem and its generalization, the surplus-sharing problem \citep{young1988jet, moulin1987ijgt, moulin2000econometrica}, and were later extended to general TU-games by \citet{fknt2025wppd}.

We begin by defining a key operation on games that underpins the composition axioms. For $v \in \mathcal{V}$ and $t \in \mathbb{R}$, let 
\begin{equation*}
{v^t}(S) :=
\begin{cases}
t &\text{if} \ S=N, \\ 
v(S)  &\text{otherwise}.
\end{cases}
\end{equation*}

Our first composition axiom, \emph{composition up} (CU), states that the grand coalition worth can be allocated in two stages. First, choose any provisional grand coalition worth and allocate as if it were the true grand coalition worth. Then adjust by reducing each coalition's worth by what its members received in the first stage and allocate the remainder (or cover any shortfall) so that the totals match the actual grand coalition worth. (CU) requires that the sum of these two stages coincide with the allocation obtained by applying the solution directly to the original game.

\medskip

\noindent\textbf{Composition Up 
(CU)}:\footnote{In the bankruptcy problem, by its nature, one assumes $0 \leq t \le v(N)$. In TU-games, however, no such restriction is required.} 
For any $v\in \mathcal{V}$ and $t \in \mathbb{R}$,
$\varphi (v) = \varphi (v^t)+ \varphi(U(\varphi(v^t),v)) $, where
$$
U(x,v)(S) := v(S) - \sum_{i \in S} x_i,
$$
for all $S \subseteq N$.


The \emph{composition down} (CD) axioms state that the allocation can be obtained in two stages without changing the outcome. First, choose any provisional grand coalition worth and compute the reference payoffs as if it were the true grand coalition worth. Then, keeping the grand coalition worth at its actual worth, transform the game by applying one of the following operations to every proper coalition: (a) replace each coalition's worth by the sum of its members' reference payoffs; or (b) replace each coalition's worth by the grand coalition worth minus the sum of the reference payoffs of players outside the coalition. (CD) requires that the solution applied to either transformed game coincide with the solution applied directly to the original game. We refer to the version based on operation (a) as \emph{Insider-Guaranteed Composition Down} (CD$_{\text{I}}$) and the version based on operation (b) as \emph{Outsider-Guaranteed Composition Down} (CD$_{\text{O}}$).\footnote{The Composition Down introduced in \citet{fknt2025wppd} is  (CD$_{\text{I}}$).}


\medskip

\noindent\textbf{Insider-Guaranteed Composition Down (CD$_{\text{I}}$)
}:\footnote{In the bankruptcy problem, by its nature, one assumes $0 \leq v(N) \leq t$. In TU-games, however, no such restriction is required.}  For any $v\in \mathcal{V}$ and $t \in \mathbb{R}$,
$\varphi (v) =\varphi \left(D_I(\varphi(v^t),v)\right)$, where
$$
D_I(x,v)(S) := 
\begin{cases}
v(N) \ &\text{if} \ S = N \\
\sum_{i \in S} x_i \ &\text{if} \ S \subsetneq N.
\end{cases}
$$

\noindent\textbf{Outsider-Guaranteed Composition Down (CD$_{\text{O}}$)
}: For any $v\in \mathcal{V}$ and $t \in \mathbb{R}$,
$\varphi (v) =\varphi \left(D_O(\varphi(v^t),v)\right)$, where
$$
D_O(x,v)(S) := 
\begin{cases}
v(N) - \sum_{i \notin S} x_i \ &\text{if} \ S \neq \emptyset\\
0 \ &\text{if} \ S = \emptyset.
\end{cases}
$$

Since TU-games provide a richer framework than the bankruptcy problem, there are multiple ways to define transformed games. 
Our two transformation operations, applied after fixing the reference payoffs, are closely related to the standard TU-game extension of a bankruptcy problem and its dual, except that we omit the usual $\min$ and $\max$ truncations. 

We establish the equivalence of (CU), (CD$_{\text{O}}$), and (CD$_{\text{I}}$) within the class of ELS values, and we characterize the ELS values that satisfy these composition axioms.\footnote{Theorem~\ref{comp ELS} continues to hold under assumptions analogous to those in the bankruptcy problem: $0\le t\le v(N)$ for (CU), and $0\le v(N)\le t$ for (CD$_{\mathrm{O}}$) and (CD$_{\mathrm{I}}$).}

\begin{theorem}\label{comp ELS}
Suppose that $\varphi$ is an ELS value. Then the following are equivalent:

\vspace{-0.3cm}

\begin{itemize}\itemsep-0.1cm
\item[(I)] $\varphi$ satisfies (CU).
\item[(II)] $\varphi$ satisfies (CD$_{\text{O}}$).
\item[(III)] $\varphi$ satisfies (CD$_{\text{I}}$).
\item[(IV)] $\varphi$ is either  an affine combination of $(\psi^s)_{s=1}^{n-1}$, or $\varphi=ED$.
\end{itemize}
\end{theorem}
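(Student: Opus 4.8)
The plan is to use (L) to collapse all three composition axioms onto a single fixed-point condition on additive games. Write $x=\varphi(v^t)$; by (E), $\sum_{i\in N}x_i=t$. Since $U(x,v)=v-\widehat{x}$ as games, linearity turns (CU) into $\varphi(v^t)=\varphi(\widehat{\varphi(v^t)})$, and since $v^t$ ranges over all of $\mathcal{V}$, this says $x=\varphi(\widehat{x})$ for every $x$ in the image of $\varphi$. For (CD$_{\text{I}}$) and (CD$_{\text{O}}$), I would first rewrite the transformed games as $D_I(x,v)=\widehat{x}+(v(N)-t)u_N$ and $D_O(x,v)=\widehat{x}+(v(N)-t)e$, where $e(S)=1$ for all $\emptyset\neq S\subseteq N$. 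Because $u_N$ and $e$ are symmetric, (E) and (SYM) give $\varphi(u_N)=\varphi(e)=(1/n,\dots,1/n)$, and the identity $v=v^t+(v(N)-t)u_N$ shows that the same level-shift term $\tfrac{v(N)-t}{n}$ appears in $\varphi_i(v)$ and in $\varphi_i(D_\bullet(x,v))$. After cancellation, all three axioms reduce to the identical requirement $x_i=\varphi_i(\widehat{x})$ for every $i$ and every $x$ in the image of $\varphi$.

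The heart of the argument is computing $\varphi_i(\widehat{x})$. Using Lemma~\ref{Linear}(ii)--(iii) (or equivalently the $\sigma$-Shapley form of Theorem~\ref{ELS sigma-sh}), a direct calculation yields $\varphi_i(\widehat{x})=(A-B)x_i+B\sum_{j\in N}x_j$ for scalars $A,B$ built from the coefficients $(p_s)$, together with the simplifications $A-B=\tfrac{\Sigma}{n-1}$ and $nB=1-\tfrac{\Sigma}{n-1}$, where $\Sigma:=\sum_{s=1}^{n-1}s\binom{n}{s}p_s$. Setting $C:=1-\tfrac{\Sigma}{n-1}$ and using $\sum_{j}x_j=n\overline{x}$, the common condition becomes $C\,(x_i-\overline{x})=0$ for all $i$ and all $x$ in the image of $\varphi$. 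To connect $C$ to (IV), I would identify it with the weight $\lambda_n$ in the representation $\varphi=\sum_{s=1}^{n}\lambda_s\psi^s$ of \citet{ruiz1998family}: computing the coefficients of each $\psi^s$ shows they vanish except at coalition sizes $s$ and $n$, which inverts to $\lambda_s=\tfrac{s\binom{n}{s}}{n-1}\,p_s$ for $s\le n-1$ and hence $\lambda_n=1-\sum_{s=1}^{n-1}\lambda_s=C$.

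It then remains to read off the dichotomy. The condition $C(x_i-\overline{x})=0$ on the image holds if and only if either $C=\lambda_n=0$ or $x_i=\overline{x}$ for every $x$ in the image; in the latter case (E) forces $\varphi_i(v)=\overline{\varphi(v)}=v(N)/n$, that is $\varphi=ED$. Thus the common condition is equivalent to $\lambda_n=0$ (an affine combination of $(\psi^s)_{s=1}^{n-1}$) or $\varphi=ED$, which is exactly (IV); the converse is immediate, since both families make $C(x_i-\overline{x})$ vanish. I expect the main work to be organizational rather than conceptual: executing the $\varphi_i(\widehat{x})$ computation and the identities $A-B=\Sigma/(n-1)$, $nB=C$ without slips, recognizing the two transformed games as additive-plus-level-shift so that the shifts cancel uniformly across (CU), (CD$_{\text{I}}$), and (CD$_{\text{O}}$), and---most delicately---arguing the image step, namely that $C\neq 0$ forces $x_i=\overline{x}$ on all of $\mathrm{Im}(\varphi)$ rather than merely on a subset where it might hold vacuously, which together with (E) pins down $ED$.
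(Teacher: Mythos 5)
Your proposal is correct and takes essentially the same approach as the paper: linearity (with (E) and (SYM)) collapses each of (CU), (CD$_{\text{I}}$), and (CD$_{\text{O}}$) to the fixed-point condition $\varphi(\widehat{\varphi(v)})=\varphi(v)$ on additive games, and the formula $\varphi_i(\widehat{x})=(1-\alpha_n)x_i+\alpha_n\overline{x}$ then yields the dichotomy $\alpha_n=0$ (affine combination of $(\psi^s)_{s=1}^{n-1}$) versus $\varphi=ED$. Your direct computation of the coefficients $A,B$ and the identification $C=\lambda_n$ is an inlined version of the paper's Lemma~\ref{IGP_s} combined with the \citet{ruiz1998family} decomposition, and your rewritings $D_I=\widehat{x}+(v(N)-t)u_N$ and $D_O=\widehat{x}+(v(N)-t)e$ are exactly the game decompositions used in the paper's proof.
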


We introduce two axioms used to prove Theorem \ref{comp ELS}; the proof also relies on two lemmas. The proofs of these lemmas are collected in Appendix \ref{appendix:sec_consistency}.

\medskip

\noindent\textbf{Inessential Game Property (IGP)}:\footnote{This axiom was proposed by \cite{rvz1996ijgt}.} For any $x \in \mathbb{R}^n$ and $i \in N$, $\varphi_i(\widehat{x})=x_i$.

\medskip

\noindent\textbf{Renegotiation-proofness (RNP)}: For any $v \in \mathcal{V}$ and $i \in N$, $\varphi_i(\widehat{\varphi(v)})=\varphi_i(v)$.

\medskip

It is straightforward to verify that (IGP) implies (RNP). A canonical example of a solution that satisfies (RNP) but not (IGP) is the egalitarian value.

\begin{lemma}\label{IGP_s}
For any $s=1, \ldots, n-1$, $\psi^{s}$ satisfies (IGP).
\end{lemma}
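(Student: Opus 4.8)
The plan is to evaluate the explicit formula for $\psi^s_i$ directly on the additive game $\widehat{x}$ and to verify that the resulting expression collapses to $x_i$. Since $\psi^s$ is given by an explicit linear expression in the coalition worths, and since $\widehat{x}(S)=\sum_{j\in S}x_j$ is itself linear in $x$, the whole argument reduces to evaluating a couple of combinatorial sums and simplifying ratios of binomial coefficients.

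First I would compute the three quantities entering the formula. The grand-coalition term is immediate: $\widehat{x}(N)=\sum_{j\in N}x_j=n\overline{x}$. For the two size-$s$ sums, the key step is a double-counting argument. In $\sum_{|S|=s}\widehat{x}(S)$, each player $j$ is counted once for every size-$s$ coalition containing it, so $j$ contributes $\binom{n-1}{s-1}x_j$; hence this sum equals $\binom{n-1}{s-1}\sum_{j\in N}x_j$. In the restricted sum $\sum_{|S|=s,\, i\notin S}\widehat{x}(S)$, only players $j\neq i$ appear, and each such $j$ lies in exactly $\binom{n-2}{s-1}$ size-$s$ coalitions avoiding $i$; hence this sum equals $\binom{n-2}{s-1}\sum_{j\neq i}x_j=\binom{n-2}{s-1}(n\overline{x}-x_i)$.

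Next I would substitute these into the definition and simplify the binomial ratios via $\binom{n-1}{s-1}/\binom{n}{s}=s/n$ and $\binom{n-2}{s-1}/\binom{n-1}{s}=s/(n-1)$. After distributing the prefactor $\tfrac{n-1}{s}$, the terms telescope:
\[
\psi^s_i(\widehat{x})=\overline{x}+\frac{n-1}{s}\left(s\overline{x}-\frac{s}{n-1}\bigl(n\overline{x}-x_i\bigr)\right)=\overline{x}+(n-1)\overline{x}-\bigl(n\overline{x}-x_i\bigr)=x_i,
\]
which is exactly the conclusion required by (IGP).

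The computation is entirely routine, so the only point demanding care is the combinatorial bookkeeping in the restricted sum, together with checking that every binomial coefficient is well-defined over the full range $s\in\{1,\ldots,n-1\}$; in particular, $\binom{n-1}{s}$ in the denominator requires $s\le n-1$, which holds by hypothesis, and for $s=n-1$ one has $\binom{n-2}{s-1}=1$, consistent with each $j\neq i$ lying in the unique size-$(n-1)$ coalition $N\setminus\{i\}$. Beyond this careful counting, I expect no genuine obstacle.
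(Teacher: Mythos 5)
Your proof is correct and follows essentially the same route as the paper: a direct evaluation of the explicit formula for $\psi^s_i$ on the additive game $\widehat{x}$, counting how many size-$s$ coalitions contain each player and simplifying the binomial ratios. The only (cosmetic) difference is that the paper first rewrites the formula in terms of the split sums $V_i^+=\sum_{|S|=s,\,i\in S}v(S)$ and $V_i^-=\sum_{|S|=s,\,i\notin S}v(S)$ before substituting, whereas you plug into the definition's two sums directly, which is if anything slightly shorter.
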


Let $x \in \mathbb{R}^n$.
Suppose that $\varphi$ is an ELS value. Then,   $\varphi_i(\widehat{x})=\alpha_n ED_i(\widehat{x})+\sum_{s=1}^{n-1}\alpha_s\psi^{s}_i(\widehat{x})$ with $\sum_{s=1}^n\alpha_s=1$. 
By Lemma~\ref{IGP_s},
\begin{equation}
\varphi_i(\widehat{x})
=\alpha_n \overline{x}+ \left(\sum_{s=1}^{n-1}\alpha_s\right)  x_i
=\alpha_n \overline{x}+(1-\alpha_n)x_i.
\label{ELS_inessnetial}
\end{equation}

\begin{lemma}\label{IGP_cons}
The following implications hold. 

\vspace{-0.3cm}

\begin{itemize}\itemsep-0.1cm
\item[(i)] (L) and (CU) together imply (RNP).
\item[(ii)] (E) and (CD$_{\text{O}}$) together imply (RNP).
\item[(iii)] (E) and (CD$_{\text{I}}$) together imply (RNP).
\end{itemize}
\end{lemma}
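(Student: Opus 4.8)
The plan is to prove the three implications of Lemma~\ref{IGP_cons} by showing that each composition axiom, combined with the indicated background axioms, forces the solution to agree with its own allocation on the additive game $\widehat{\varphi(v)}$ induced by that allocation. The common strategy is to apply each composition axiom at a \emph{specially chosen provisional worth} $t$ so that the auxiliary transformed game collapses to a recognizable object, and then read off (RNP) from the resulting identity.

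For part~(i), I would instantiate (CU) at the provisional worth $t = v(N)$. Then $v^t = v$, so the first-stage allocation is $\varphi(v^t) = \varphi(v)$, and the residual game becomes $U(\varphi(v),v)(S) = v(S) - \sum_{i\in S}\varphi_i(v)$, which equals $v(S) - \widehat{\varphi(v)}(S)$ for every $\emptyset\neq S\subseteq N$. In other words, $U(\varphi(v),v) = v - \widehat{\varphi(v)}$ as characteristic functions. The (CU) identity then reads $\varphi(v) = \varphi(v) + \varphi\bigl(v - \widehat{\varphi(v)}\bigr)$, which forces $\varphi\bigl(v - \widehat{\varphi(v)}\bigr) = 0$. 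Invoking (L) to split $\varphi(v - \widehat{\varphi(v)}) = \varphi(v) - \varphi(\widehat{\varphi(v)})$ gives $\varphi(\widehat{\varphi(v)}) = \varphi(v)$, which is exactly (RNP).

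For parts~(ii) and~(iii), the same choice $t = v(N)$ makes the reference payoffs equal to $\varphi(v)$, so it remains to identify the transformed games $D_O(\varphi(v),v)$ and $D_I(\varphi(v),v)$. The key observation is that both transformations leave the grand-coalition worth at $v(N)$ while replacing every proper-coalition worth by a quantity built from $\varphi(v)$; using (E), namely $\sum_{k\in N}\varphi_k(v) = v(N)$, one checks that on proper coalitions $D_I(\varphi(v),v)(S) = \sum_{i\in S}\varphi_i(v) = \widehat{\varphi(v)}(S)$ and $D_O(\varphi(v),v)(S) = v(N) - \sum_{i\notin S}\varphi_i(v) = \sum_{i\in S}\varphi_i(v) = \widehat{\varphi(v)}(S)$ as well, and that both agree with $\widehat{\varphi(v)}$ at $S = N$ (since $\widehat{\varphi(v)}(N) = v(N)$ by efficiency). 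Hence, under (E), both transformed games coincide identically with the additive game $\widehat{\varphi(v)}$, and the respective composition-down identities $\varphi(v) = \varphi\bigl(D_{\bullet}(\varphi(v),v)\bigr)$ become precisely $\varphi(v) = \varphi(\widehat{\varphi(v)})$, i.e. (RNP).

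The main obstacle — though a mild one — is verifying that the two composition-down transformations really do collapse to the \emph{same} additive game, since their defining formulas look different (one sums insiders' payoffs, the other subtracts outsiders' payoffs from $v(N)$). This is where efficiency is indispensable: it is exactly the condition $v(N) - \sum_{i\notin S}\varphi_i(v) = \sum_{i\in S}\varphi_i(v)$ that reconciles the two, and it is also what guarantees consistency at $S=N$. One should state this efficiency reduction carefully and note that it is the only place efficiency (rather than mere linearity) is used in parts~(ii) and~(iii), in contrast to part~(i) where linearity does the cancellation instead.
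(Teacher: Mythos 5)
Your proposal is correct and follows essentially the same route as the paper's proof: instantiate each composition axiom at the provisional worth $t=v(N)$ (so that $v^t=v$), identify $U(\varphi(v),v)=v-\widehat{\varphi(v)}$ and use (L) to cancel in part~(i), and use (E) to show $D_I(\varphi(v),v)=D_O(\varphi(v),v)=\widehat{\varphi(v)}$ in parts~(ii) and~(iii). The only difference is that you spell out the coalition-by-coalition verification that both composition-down games collapse to $\widehat{\varphi(v)}$, which the paper states in a single line.
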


\begin{proof}[Proof of Theorem \ref{comp ELS}.]
(IV) $\Rightarrow$ (I), (II) and (III): It is straightforward to show that  $\varphi=ED$ satisfies (CU), (CD$_{\text{O}}$), and (CD$_{\text{I}}$).
Let $\varphi$ be an affine combination of $(\psi^s)_{s=1}^{n-1}$.
Then, by letting $\alpha_n=0$ in \eqref{ELS_inessnetial}, $\varphi$ satisfies
\begin{equation}
    \varphi(\widehat{x})=x    
    \label{eq:x_i}
\end{equation}
for any $x\in \mathbb{R}^n$.
We show that $\varphi$ satisfies all the composition axioms.

\medskip

By definition of $U$, we have $U(\varphi(v^t),v)=v-\widehat{\varphi(v^t)}$. 
By \eqref{eq:x_i} and (L), 
$
\varphi(v^t)+\varphi(U(\varphi(v^t),v))
=\varphi(v^t)+\varphi(v)-\varphi(v^t)=\varphi(v),
$
implying that $\varphi$ satisfies (CU).

Given $v \in \mathcal{V}$ and $t \in \mathbb{R}$, define the game $w \in \mathcal{V}$ by $w(S)=v(N)-t$ for any $\emptyset \neq S \subseteq N$. 
By definition of $D_O$, we have $D_O(\varphi(v^t),v)=\widehat{\varphi(v^t)}+w$.
By \eqref{eq:x_i} and (L), 
\[
\varphi\left(D_O(\varphi(v^t),v)\right)
=\varphi(\widehat{\varphi(v^t)})+\varphi(w)
=\varphi(v^t)+\varphi(w)
=\varphi(v^t+w).
\]
By definition of $w$, $v^t+w=v+w^0$.
Since $\varphi(w^0)=0$ by (E) and (SYM), 
$\varphi(v^t+w)
=\varphi(v+w^0)
=\varphi(v)+\varphi(w^0)
=\varphi(v)$, implying that $\varphi$ satisfies (CD$_{\text{O}}$).

By definition of $D_I$, we have $D_I(\varphi(v^t),v)=\widehat{\varphi(v^t)}+(v(N)-t)u_N$.
By \eqref{eq:x_i} and (L),
\[
\varphi(D_I(\varphi(v^t),v)
=\varphi(\widehat{\varphi(v^t)})+\varphi((v(N)-t)u_N)
=\varphi(v^t)+\varphi((v(N)-t)u_N)
=\varphi(v^t+(v(N)-t)u_N).
\]
Since $v^t+(v(N)-t)u_N=v$ by definition of $v^t$, $\varphi$ satisfies (CD$_{\text{I}}$).

\medskip

\noindent (I), (II) or (III) $\Rightarrow$ (IV): By Lemma \ref{IGP_cons}, an ELS value that satisfies one of the composition axioms also satisfies (RNP). Hence, by \eqref{ELS_inessnetial},
$
\varphi_i(v)
=\varphi_i(\widehat{\varphi(v)})
=\alpha_n \overline{\varphi(v)}+(1-\alpha_n)\varphi_i(v)$, which is equivalent to 
\[
\alpha_n\left(\varphi_i(v)-\overline{\varphi(v)}\right)=0
\]
for all $v \in \mathcal{V}$ and $i \in N$. Therefore, either (a) $\alpha_n=0$, or (b) $\varphi_i(v)=\overline{\varphi(v)}$ for all $v \in \mathcal{V}$ and $i \in N$. In case (a), $\varphi$ is an affine combination of $(\psi^s)_{s=1}^{n-1}$; in case (b), $\varphi$ coincides with the ED value. 
\end{proof}

\medskip

In Appendix \ref{appendix:independence}, we provide examples showing that (i) the three composition axioms are independent of one another, and (ii) if any one of (E), (L), or (SYM) is dropped, there exists a solution that is neither ED nor an affine combination of $(\psi^s)_{s=1}^{n-1}$.

\subsection{Active-Player Consistency}
We now consider a situation in which a coalition of players $S$ takes the initiative and seeks cooperation from the players outside the coalition. The outsiders agree to participate, but only on the condition that their payoffs are fixed ex ante: each outsider $j$ insists on receiving exactly the payoff that the solution assigns to $j$ in the original game. Once these payoffs to outsiders are guaranteed, the only remaining worth to be allocated is whatever is left after covering those guarantees. Equivalently, we obtain a reduced game in which, for every coalition $T$, its effective worth is the original worth minus the total guaranteed payoffs to the outsiders in $T$. \emph{Active-player consistency} (AC) asks that each member of the initiating coalition $S$ should receive the same payoff in this reduced game as in the original game.

(AC) can be interpreted as an internal consistency requirement. The solution prescribes a feasible payoff vector for the grand coalition in the original game. The requirement is that any coalition $S$ should be able to implement its part of this outcome by honoring the outsiders' prescribed payoffs and reallocating only the residual worth, without having to revise the payoffs of its own members. In other words, taking the lead in organizing cooperation should not force the active players to deviate from the payoffs assigned to them by the solution in the original game.

Formally, given $S \subsetneq N$, $v \in \mathcal{V}$ and $x \in \mathbb{R}$, let $R^{\text{AC},S}: \mathbb{R}^n \times \mathcal{V} \to \mathcal{V}$ be defined by
$$
R^{\text{AC},S}(x,v)(T) := v(T) - \sum_{i \in T \setminus S} x_i.
$$
for all $T \subseteq N$.\footnote{As an illustration of the reduced game construction, suppose that $n=2$ and $S=\{1\}$. Then $R^{\text{AC},S}(x,v)(\{1\}) = v(\{1\}) - \sum_{i \in \{1\} \setminus \{1\}} x_i = v(\{1\}) - 0$, $R^{\text{AC},S}(x,v)(\{2\}) = v(\{2\}) - \sum_{i \in \{2\} \setminus \{1\}} x_i = v(\{2\}) - x_2$, and $R^{\text{AC},S}(x,v)(N) = v(N) - \sum_{i \in N \setminus \{1\}} x_i = v(N) - x_2$.}

\medskip

\noindent\textbf{\textbf{Active-Player Consistency} (AC)}: For any $v \in \mathcal{V}$, $S \subsetneq N$, and $i \in S$, $\varphi_i(v)=\varphi_i\left(R^{\text{AC},S}\left(\varphi(v),v\right)\right)$.

\medskip

Using (AC), we obtain a characterization of a subclass of the ELS values.

\begin{theorem}\label{consistency ELS}
Suppose that $\varphi$ is an ELS value. Then the following are equivalent:

\vspace{-0.3cm}

\begin{itemize}\itemsep-0.1cm
\item[(I)] $\varphi$ satisfies (AC).
\item[(II)] $\varphi$ is an affine combination of $(\psi^s)_{s=1}^{n-1}$.
\end{itemize}
\end{theorem}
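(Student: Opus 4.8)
The plan is to recast the active-player reduced game as the original game minus an additive game, after which both implications collapse to linearity together with the known action of ELS values on additive games recorded in \eqref{ELS_inessnetial}. The key preliminary observation is an identity. Fix $S \subsetneq N$ and $x \in \mathbb{R}^n$, and define $y \in \mathbb{R}^n$ by $y_i = x_i$ for $i \in N \setminus S$ and $y_i = 0$ for $i \in S$. Then $\sum_{i \in T \setminus S} x_i = \sum_{i \in T} y_i = \widehat{y}(T)$ for every $T \subseteq N$, so that $R^{\text{AC},S}(x,v) = v - \widehat{y}$. Taking $x = \varphi(v)$, the reduced game appearing in (AC) is precisely $v - \widehat{y}$, where $y$ agrees with $\varphi(v)$ on the outsiders $N \setminus S$ and vanishes on $S$.

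For the implication (II) $\Rightarrow$ (I), suppose $\varphi$ is an affine combination of $(\psi^s)_{s=1}^{n-1}$. By \eqref{eq:x_i} such a value satisfies $\varphi(\widehat{y}) = y$, so by (L),
\[
\varphi_i\!\left(R^{\text{AC},S}(\varphi(v),v)\right) = \varphi_i(v - \widehat{y}) = \varphi_i(v) - \varphi_i(\widehat{y}) = \varphi_i(v) - y_i.
\]
For $i \in S$ we have $y_i = 0$, whence $\varphi_i(R^{\text{AC},S}(\varphi(v),v)) = \varphi_i(v)$, which is exactly (AC).

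For the converse (I) $\Rightarrow$ (II), write the ELS value as $\varphi_i(v) = \alpha_n ED_i(v) + \sum_{s=1}^{n-1} \alpha_s \psi^s_i(v)$ with $\sum_{s=1}^n \alpha_s = 1$; it then suffices to prove $\alpha_n = 0$, since that leaves an affine combination of $(\psi^s)_{s=1}^{n-1}$. Using (L) and the identity above, (AC) is equivalent to $\varphi_i(\widehat{y}) = 0$ for every admissible $S$, every $i \in S$, and every $v$. Substituting \eqref{ELS_inessnetial} and using $y_i = 0$ for $i \in S$, this condition becomes $\alpha_n \overline{y} = 0$. I would then specialize to the single-outsider coalition $S = N \setminus \{j\}$, for which $\overline{y} = \tfrac{1}{n}\varphi_j(v)$, and choose a game $v$ with $\varphi_j(v) \neq 0$ — for instance any symmetric game with $v(N) \neq 0$, where (E) and (SYM) give $\varphi_j(v) = v(N)/n \neq 0$. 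This forces $\alpha_n = 0$.

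The conceptual heart of the argument is the opening identity $R^{\text{AC},S}(\varphi(v),v) = v - \widehat{y}$: once the reduced game is recognized as the original game less an additive game supported on the outsiders, linearity reduces everything to the one-line formula \eqref{ELS_inessnetial}. I expect the only point requiring genuine care to be the test-game selection in (I) $\Rightarrow$ (II), where one must exhibit a single pair $(v,S)$ on which the egalitarian term is active; the single-outsider choice $S = N \setminus \{j\}$ handles this cleanly by reducing $\overline{y}$ to a scalar multiple of one coordinate of $\varphi(v)$.
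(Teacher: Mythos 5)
Your proof is correct and takes essentially the same route as the paper's: the identity $R^{\text{AC},S}(\varphi(v),v)=v-\widehat{y}$ (with $y$ equal to $\varphi(v)$ on outsiders and zero on $S$), linearity, and \eqref{ELS_inessnetial} reduce (AC) to the condition $\alpha_n\overline{y}=0$, which forces $\alpha_n=0$. Your explicit test-game step (the single-outsider coalition $S=N\setminus\{j\}$ with a symmetric game of nonzero worth) merely makes precise the final ``only if'' that the paper leaves implicit.
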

\begin{proof}
Let $\varphi$ be an ELS value: $\varphi_i(v)=\alpha_n ED_i(v)+\sum_{s=1}^{n-1}\alpha_s\psi^{s}_i(v)$ with $\sum_{s=1}^n\alpha_s=1$.
For any $S \subsetneq N$, define $\varphi^S$ as follows: for any $v \in \mathcal{V}$, 
$$
\varphi_i^S(v)=
\begin{cases}
\varphi_i(v) &\text{if} \ i \notin S, \\
0 &\text{if} \ i \in S.
\end{cases}
$$
Then, $R^{\text{AC},S}\left(\varphi(v),v\right)=v- \widehat{\varphi^S(v)}$. Moreover, by \eqref{ELS_inessnetial}, for any $i\in S$, because $\varphi_i^S(v)=0$, we have $\varphi_i(\widehat{\varphi^S(v)})=\alpha_n \overline{\varphi^S(v)}$. Hence, by (L),
(AC) is equivalent to 
\[
\varphi_i(v)=\varphi_i(v- \widehat{\varphi^S(v)}) 
\Leftrightarrow
\varphi_i(\widehat{\varphi^S(v)})
=0
\Leftrightarrow
\alpha_n \overline{\varphi^S(v)}=0
\]
for any $v \in \mathcal{V}$, $S \subsetneq N$, and $i \in S$.
This holds if and only if $\alpha_n=0$, that is, $\varphi$ is an affine combination of $(\psi^s)_{s=1}^{n-1}$.
\end{proof}

\medskip

\begin{remark}
Although the axiom (AC) does not explicitly impose that 
$
\varphi_{i}(R^{AC,S}(\varphi(v),v)) = 0
$
for all $ i \in N \setminus S $, this property follows as a consequence of (AC) when $ \varphi $ is an ELS value. To see this, recall from the above proof that
$
R^{AC,S}(\varphi(v),v) = v - \widehat{\varphi^{S}(v)}.
$
By linearity of $ \varphi $, for each $ i \in N \setminus S $,
\[
\varphi_{i}(R^{AC,S}(\varphi(v),v)) = \varphi_{i}(v) - \varphi_{i}(\widehat{\varphi^{S}(v)}).
\]
By (\ref{ELS_inessnetial}),
$
\varphi_{i}(\widehat{\varphi^{S}(v)}) = \alpha_{n}\overline{\varphi^{S}(v)} + (1 - \alpha_{n})\varphi_{i}^{S}(v).
$
By definition of $ \varphi^{S} $, we have $ \varphi_{i}^{S}(v) = \varphi_{i}(v) $ for each $ i \in N \setminus S $. Substituting yields
\[
\varphi_{i}(R^{AC,S}(\varphi(v),v)) 
= \varphi_{i}(v) - \bigl(\alpha_{n}\overline{\varphi^{S}(v)} + (1 - \alpha_{n})\varphi_{i}(v)\bigr)
= \alpha_{n}\bigl(\varphi_{i}(v) - \overline{\varphi^{S}(v)}\bigr).
\]
By Theorem \ref{consistency ELS}, (AC) implies $ \alpha_{n} = 0 $. It follows that
$
\varphi_{i}(R^{AC,S}(\varphi(v),v)) = 0
$
for each $ i \in N \setminus S $. \hfill $\blacksquare$
\end{remark}

Theorem~\ref{consistency ELS}, we study the implications of (AC) within the class of ELS values. However, the full strength of (L) is not needed for that characterization. We refer the reader to Appendix~\ref{appendix:discussion_ac} for further discussion.

We provide examples showing that the axioms used in Theorem~\ref{consistency ELS} are independent. First, if (E) is dropped, consider the solution defined by $\varphi_i(v)=0$ for every $v\in\mathcal{V}$ and every $i\in N$. This solution satisfies (L), (SYM), and (AC), but violates (E). Second, if (L) is dropped, one can also find a solution that satisfies (E), (SYM), and (AC), but violates (L). Since its construction is involved, we refer the reader to Appendix~\ref{Appendix: AC} for details. Third, if (SYM) is dropped, consider the solution defined by $\varphi_1(v)=v(N)$ and $\varphi_i(v)=0$ for every $i\in N\setminus\{1\}$ and every $v\in\mathcal{V}$. This solution satisfies (E), (L), and (AC), but violates (SYM). Finally, the egalitarian value satisfies (E), (L), and (SYM), but violates (AC).

\subsection{Connections with Other Subclasses of the ELS Values}
We relate our theorems to existing results in the literature. First, \citet{wang2019family} show that a solution is an affine combination of $(\psi^s)_{s=1}^{n-1}$ if and only if it is an ELS value satisfying (IGP). 
Our Theorem~\ref{comp ELS}, by contrast, relies on (RNP), which is weaker than (IGP). Consequently, the family of solutions characterized by Theorem~\ref{comp ELS} includes the egalitarian value, whereas the family in \citet{wang2019family} does not. Moreover, our Theorem~\ref{consistency ELS} shows that, within the class of ELS values, (IGP) and (AC) are equivalent. As a counterpart to Lemma \ref{IGP_cons}, we now show the logical relationship between (AC) and (RNP). 

\begin{lemma}\label{AC_IGP}
(E), (L), and (AC) together imply (RNP). 
\end{lemma}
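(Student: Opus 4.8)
The plan is to reduce (RNP) to determining the solution on the singleton unanimity games $u_{\{i\}}$, and then to pin those values down using (AC) and (E). I will repeatedly use that the additive-game map $x\mapsto\widehat{x}$ is linear, so by (L) the map $x\mapsto\varphi(\widehat{x})$ is linear on $\mathbb{R}^n$; (RNP) is exactly the statement that each $\varphi(v)$ is a fixed point of this map.

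First I would record the consequence of (L) and (AC) already isolated in the proof of Theorem~\ref{consistency ELS}: with $\varphi^S(v)$ denoting the vector equal to $\varphi(v)$ off $S$ and $0$ on $S$, one has $R^{\text{AC},S}(\varphi(v),v)=v-\widehat{\varphi^S(v)}$, so (L) and (AC) together give $\varphi_k(\widehat{\varphi^S(v)})=0$ for every $S\subsetneq N$ and every $k\in S$. I will apply this identity for two choices of $S$.

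Next, fix a player $i$ and take $S=\{i\}$. Since $\varphi(v)$ and $\varphi^{\{i\}}(v)$ differ only in coordinate $i$ (where the difference is $\varphi_i(v)$) and $\widehat{\mathbf{1}_i}=u_{\{i\}}$, linearity of $\widehat{\cdot}$ gives $\widehat{\varphi(v)}=\widehat{\varphi^{\{i\}}(v)}+\varphi_i(v)\,u_{\{i\}}$. Applying (L) and the identity from the first step, $\varphi_i(\widehat{\varphi(v)})=\varphi_i(\widehat{\varphi^{\{i\}}(v)})+\varphi_i(v)\,\varphi_i(u_{\{i\}})=\varphi_i(v)\,\varphi_i(u_{\{i\}})$. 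Hence (RNP) for coordinate $i$ is equivalent to $\varphi_i(v)\big(\varphi_i(u_{\{i\}})-1\big)=0$ for all $v$; if $\varphi_i\equiv 0$ this holds trivially, so it remains to show $\varphi_i(u_{\{i\}})=1$ whenever $\varphi_i\not\equiv 0$.

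The crux is this last evaluation, and here I would use the complementary choice $S=N\setminus\{i\}$. For this $S$ we have $\varphi^{N\setminus\{i\}}(v)=\varphi_i(v)\,\mathbf{1}_i$, hence $\widehat{\varphi^{N\setminus\{i\}}(v)}=\varphi_i(v)\,u_{\{i\}}$, and the first-step identity (valid for each $k\in S$, i.e.\ each $k\neq i$) becomes $\varphi_i(v)\,\varphi_k(u_{\{i\}})=0$ for all $v$ and all $k\neq i$. Choosing $v^*$ with $\varphi_i(v^*)\neq 0$ yields $\varphi_k(u_{\{i\}})=0$ for every $k\neq i$; then (E) applied to $u_{\{i\}}$ gives $\sum_k\varphi_k(u_{\{i\}})=u_{\{i\}}(N)=1$, forcing $\varphi_i(u_{\{i\}})=1$. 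Substituting into the previous paragraph yields (RNP). The main obstacle is precisely isolating $\varphi_i(u_{\{i\}})=1$: (AC) on its own only governs how the inactive players' guaranteed payoffs are absorbed, and it is the specific choice $S=N\setminus\{i\}$—which makes the residual game proportional to $u_{\{i\}}$—combined with (E) that pins the constant to $1$.
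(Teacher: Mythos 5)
Your proof is correct and follows essentially the same route as the paper's: both rest on the identity $\varphi_k\left(\widehat{\varphi^S(v)}\right)=0$ for all $k\in S$ (derived from (L) and (AC)), applied to the complementary choices $S=\{i\}$ and $S=N\setminus\{i\}$, combined with (E) evaluated on the additive game supported on coordinate $i$. The only cosmetic difference is that the paper evaluates $\varphi_i\left(\widehat{\varphi^{N\setminus\{i\}}(v)}\right)=\varphi_i(v)$ directly, whereas you factor out the scalar $\varphi_i(v)$ and pin down $\varphi_i(u_{\{i\}})=1$, which forces your (harmless) case split on whether $\varphi_i\equiv 0$.
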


Second, affine combinations of $\{\psi^s\}_{s=1}^{n-1}$ include an important class of solutions, the \emph{least-square values} of \citet{ruiz1998family}. For $v\in\mathcal{V}$, $x\in\mathbb{R}^n$, and $S\subseteq N$, define the excess $e(S,x):=v(S)-\widehat{x}(S)$. Given a weight function $m:\{0,1,\ldots,n\}\to\mathbb{R}_{+}$, the \emph{least-square value with weights $m$} is obtained by
\[
\min_{x \in \mathbb{R}^n} \left\{\sum_{S \subseteq N}m\left(|S|\right) \left(e(S,x) - \overline{e(v,x)}\right)^2\right\} \ \text{s.t.} \ \sum_{i \in N} x_i = v(N), 
\]
where $\overline{e(v,x)} := \frac{1}{2^n-1}\sum_{\emptyset \neq S\subseteq N}e(S,x)$ is the average excess at $x$. 

\citet{ruiz1998family} establish a characterization of the least-square values based on (IGP) and the following monotonicity axiom.

\noindent\textbf{Coalitional Monotonicity (CM)}: 
For any $v,w\in\mathcal{V}$ and $T\subseteq N$, if $v(T)>w(T)$ and $v(S)=w(S)$ for all $S\neq T$, then $\varphi_i(v)\ge \varphi_i(w)$ for all $i\in T$.

\begin{theorem*}[\citealp{ruiz1998family}, Theorem 8]
Suppose that $\varphi$ is an ELS value. Then the following are equivalent:

\vspace{-0.3cm}

\begin{itemize}\itemsep-0.1cm
\item[(I)] $\varphi$ satisfies (IGP) and (CM).
\item[(II)] $\varphi$ is a least-square value.
\end{itemize}
\end{theorem*}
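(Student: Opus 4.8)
The plan is to reduce both conditions to explicit restrictions on the coefficients $(\alpha_s)_{s=1}^n$ in the representation $\varphi=\alpha_n ED+\sum_{s=1}^{n-1}\alpha_s\psi^s$ (with $\sum_{s=1}^n\alpha_s=1$) available for every ELS value, and to show that (I) and (II) are each equivalent to the same restriction, namely $\alpha_n=0$ together with $\alpha_s\ge 0$ for all $s$. The argument decomposes into three independent pieces: (a) translate (IGP) into $\alpha_n=0$; (b) translate (CM) into $\alpha_s\ge 0$; and (c) identify the least-square values with the convex combinations of $(\psi^s)_{s=1}^{n-1}$, i.e.\ with the ELS values satisfying $\alpha_n=0$ and $\alpha_s\ge 0$.

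For (a), I would invoke \eqref{ELS_inessnetial}, which gives $\varphi_i(\widehat{x})=\alpha_n\overline{x}+(1-\alpha_n)x_i$, so (IGP) holds iff $\alpha_n(\overline{x}-x_i)=0$ for all $x$ and $i$, i.e.\ iff $\alpha_n=0$; this also recovers the statement of \citet{wang2019family} that ELS $+$ (IGP) are exactly the affine combinations of $(\psi^s)_{s=1}^{n-1}$. For (b), I would use (L): by Lemma~\ref{Linear}, $\varphi_i(v)-\varphi_i(w)$ equals the coefficient of $v(T)$ in $\varphi_i$ times $(v(T)-w(T))$, so (CM) is equivalent to requiring that, for every $T$ and every $i\in T$, the coefficient of $v(T)$ in $\varphi_i(v)$ be nonnegative. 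A direct read-off from the definition of $\psi^s$ shows that for $i\in T$ and $T\neq N$ this coefficient equals $\alpha_{|T|}\cdot\frac{n-1}{|T|\binom{n}{|T|}}$, while for $T=N$ it equals $\tfrac1n$; since the prefactor is strictly positive, (CM) is equivalent to $\alpha_s\ge 0$ for all $s=1,\dots,n-1$.

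The crux is (c). Since the program is a convex quadratic with a single linear (efficiency) constraint whose data enter only through weights depending on $|S|$, its minimizer is linear, symmetric, and efficient, hence an ELS value; evaluating at an additive game $\widehat{x}$ makes all excesses vanish at $y=x$, so the minimum is attained there and (IGP) holds, forcing $\alpha_n=0$. To pin down the $\alpha_s$ I would solve the first-order conditions after the substitution $x_i=\frac{v(N)}{n}+y_i$ with $\sum_i y_i=0$: in these centered coordinates the average excess $\overline{e(v,x)}$ becomes independent of $y$, the quadratic form reduces to a multiple of the identity on the hyperplane $\{\sum_i y_i=0\}$ via the Pascal identity $\binom{n-1}{s-1}-\binom{n-2}{s-2}=\binom{n-2}{s-1}$, and the optimal $y_k$ is an explicit linear combination of the worths. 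Reading off the coefficient of $v(T)$ for $k\in T$, $T\neq N$, yields $\frac{m(|T|)(n-|T|)}{n\sum_{s}m(s)\binom{n-2}{s-1}}\ge 0$, and matching this against the $\psi^s$-coefficient from (b) gives $\alpha_t=\frac{g(t)m(t)}{\sum_{t'}g(t')m(t')}$ with $g(t)=t(n-t)\binom{n}{t}>0$, the normalization $\sum_t\alpha_t=1$ being automatic through the identity $t(n-t)\binom{n}{t}=n(n-1)\binom{n-2}{t-1}$.

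Finally, I would conclude that as $m$ ranges over the nonnegative weight functions the vector $(\alpha_t)_{t=1}^{n-1}$ ranges over the entire probability simplex, so the least-square values are exactly the convex combinations of $(\psi^s)_{s=1}^{n-1}$; combined with (a) and (b) this yields (I) $\Leftrightarrow$ (II). The main obstacle is the explicit solution of the constrained least-squares program in (c) and the coefficient matching; a secondary technical point is uniqueness of the minimizer when some $m(s)$ vanish, i.e.\ on the boundary of the simplex, where I would argue that the efficient minimizer remains well-defined or else obtain those values by a limiting argument from the interior.
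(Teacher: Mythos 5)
The paper does not prove this statement; it is imported verbatim as \citealp{ruiz1998family}, Theorem~8, so there is no in-paper proof to compare against. Your proposal is a correct self-contained derivation, and its key computations check out. Step (a) is exactly the argument the paper itself uses elsewhere (via \eqref{ELS_inessnetial}) and recovers the \citet{wang2019family} characterization. Step (b) is right: by the representation in Lemma~\ref{Linear}, a one-coalition perturbation changes $\varphi_i$ by $p_i(T)\bigl(v(T)-w(T)\bigr)$, so (CM) is equivalent to $p_i(T)\ge 0$ for $i\in T$; for $T\neq N$ that coefficient is $\alpha_{|T|}\tfrac{n-1}{|T|\binom{n}{|T|}}$ (the egalitarian part contributes nothing), and for $T=N$ it is $\tfrac1n$, so (CM) $\Leftrightarrow \alpha_s\ge 0$. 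Step (c), the crux, also verifies: centering $x_i=v(N)/n+y_i$ makes $\overline{e(v,x)}$ independent of $y$; the stationarity map collapses on the hyperplane to $y_k\mapsto \alpha y_k$ with $\alpha=\sum_s m(s)\binom{n-2}{s-1}$ via your Pascal identity; the optimal coefficient of $v(T)$ for $k\in T$, $T\neq N$, is $\tfrac{m(|T|)(n-|T|)}{n\alpha}$; and the identity $t(n-t)\binom{n}{t}=n(n-1)\binom{n-2}{t-1}$ makes the normalization of $\alpha_t\propto g(t)m(t)$ automatic. One small step worth making explicit: matching only the $i\in T$, $T\neq N$ coefficients suffices to conclude equality of the two values because both are ELS values, and by Lemma~\ref{Linear}(iii) an ELS value is pinned down by $(p_s)_{s=1}^{n-1}$.

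On your flagged technical point, take the first of your two options, not the limiting argument. With the paper's definition ($m$ nonnegative), the quadratic form $y\mapsto\sum_S m(|S|)\bigl(\sum_{i\in S}y_i\bigr)^2$ is positive definite on the efficiency hyperplane if and only if $m(s)>0$ for at least one $s\in\{1,\dots,n-1\}$ (the $S=\emptyset$ and $S=N$ terms are inert), so the minimizer is unique precisely then; setting $m(t)=\alpha_t/g(t)$ for any point of the \emph{closed} simplex satisfies this proviso since some $\alpha_t>0$, which gives surjectivity onto the whole simplex with no limits needed. A limiting argument would actually be insufficient: a value that is merely a limit of least-square values would not literally be one, so the equivalence would fail at the boundary under a strictly-positive-weights reading of the definition.
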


Because, within the ELS values, (AC) and (IGP) are equivalent, we obtain:

\begin{corollary}\label{cor:leastsquare}
Suppose that $\varphi$ is an ELS value. Then the following are equivalent:
\begin{itemize}\itemsep-0.1cm
\item[(I)] $\varphi$ satisfies (AC) and (CM).
\item[(II)] $\varphi$ is a least-square value.
\end{itemize}
\end{corollary}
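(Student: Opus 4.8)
The plan is to deduce the corollary from the cited theorem of \citet{ruiz1998family} (Theorem 8 above) by substituting the axiom (AC) for (IGP), exploiting the fact that these two axioms are interchangeable within the ELS class. Throughout, $\varphi$ is assumed to be an ELS value, so both the Ruiz--Valenciano--Zarzuelo characterization and Theorem~\ref{consistency ELS} apply directly.

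First I would record the key equivalence on which everything hinges: within the ELS values, (AC) and (IGP) coincide. This is assembled from two results already in hand. On one side, Theorem~\ref{consistency ELS} states that an ELS value satisfies (AC) if and only if it is an affine combination of $(\psi^s)_{s=1}^{n-1}$. On the other side, \citet{wang2019family} show that a solution is such an affine combination if and only if it is an ELS value satisfying (IGP); since $\varphi$ is already assumed to be an ELS value, this says precisely that, within the ELS class, (IGP) is equivalent to being an affine combination of $(\psi^s)_{s=1}^{n-1}$. Chaining these two equivalences through the common condition ``$\varphi$ is an affine combination of $(\psi^s)_{s=1}^{n-1}$'' yields $\text{(AC)} \Leftrightarrow \text{(IGP)}$ for every ELS value $\varphi$.

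With this in place, the remainder is a one-line substitution. Holding (CM) fixed, the conjunction ``(AC) and (CM)'' is equivalent to ``(IGP) and (CM)'' by the equivalence just established. Applying Theorem~8 of \citet{ruiz1998family}, which asserts that an ELS value satisfies (IGP) and (CM) if and only if it is a least-square value, gives that $\varphi$ satisfies (AC) and (CM) if and only if $\varphi$ is a least-square value. This is exactly the claimed equivalence of (I) and (II).

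I do not anticipate a genuine obstacle here, since the corollary is designed to repackage the Ruiz--Valenciano--Zarzuelo result through the new axiom (AC); the entire substantive content has been carried out in Theorem~\ref{consistency ELS} and in the invoked external results. The only point requiring a moment of care is logical bookkeeping: one must verify that the equivalence $\text{(AC)}\Leftrightarrow\text{(IGP)}$ is unconditional within the ELS class (requiring no auxiliary hypothesis beyond being an ELS value), so that it may be freely conjoined with (CM) on both sides without disturbing the hypotheses of the cited theorem. Once that is confirmed, the two-way implication follows immediately.
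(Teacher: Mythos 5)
Your proposal is correct and follows essentially the same route as the paper: the paper also derives the equivalence $\text{(AC)}\Leftrightarrow\text{(IGP)}$ within the ELS class by chaining Theorem~\ref{consistency ELS} with the characterization of affine combinations of $(\psi^s)_{s=1}^{n-1}$ in \citet{wang2019family}, and then substitutes (AC) for (IGP) in Theorem~8 of \citet{ruiz1998family}. Nothing is missing.
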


Relatedly, \citet{wang2019family} characterize a subclass of the ELS values that satisfy (CM), which they refer to as the \emph{ideal values}.\footnote{See \citet{wang2019family} for the definition of ideal values.}

\begin{theorem*}[\citealp{wang2019family}, Theorem 3.2]
Suppose that $\varphi$ is an ELS value. Then the following are equivalent:

\vspace{-0.3cm}

\begin{itemize}\itemsep-0.1cm
\item[(I)] $\varphi$ satisfies (CM).
\item[(II)] $\varphi$ is an ideal value.
\end{itemize}
\end{theorem*}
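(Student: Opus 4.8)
The plan is to translate both conditions into the coefficient representation of Lemma~\ref{Linear} and reduce the equivalence to a single sign condition on the coefficients. Since $\varphi$ is an ELS value, Lemma~\ref{Linear}(ii)--(iii) lets me write $\varphi_i(v)=\sum_{S:|S|=s,\,i\in S}p_s v(S)+\sum_{S:|S|=s,\,i\notin S}q_s v(S)$ with $p_n=\tfrac1n$ and $q_s=-\tfrac{s}{n-s}p_s$, and equivalently $\varphi=\alpha_n ED+\sum_{s=1}^{n-1}\alpha_s\psi^s$ with $\sum_{s=1}^n\alpha_s=1$. I would link the two representations by reading off, from the explicit formula for $\psi^s$, the coefficient of $v(S)$ for $|S|=s$ and $i\in S$: a direct computation gives $p_s=\alpha_s\cdot\frac{n-1}{s\binom{n}{s}}$ for $s=1,\dots,n-1$, while $ED=\psi^n$ and every $\psi^t$ with $t\neq s$ contribute nothing to $p_s$ when $s<n$. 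Because $\frac{n-1}{s\binom{n}{s}}>0$, the sign of $p_s$ agrees with the sign of $\alpha_s$.

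Next I would characterize (CM) as a pure sign condition. Fix $T\subsetneq N$ with $|T|=s$ and $i\in T$, and suppose $v,w$ agree on every coalition except $T$. By (L), $\varphi_i(v)-\varphi_i(w)$ equals the coefficient of $v(T)$ in $\varphi_i$ times $v(T)-w(T)$; since $i\in T$ and $|T|=s$, that coefficient is exactly $p_s$. Hence (CM) holds for all such $v,w$ and all $i\in T$ if and only if $p_s\ge0$. The case $T=N$ imposes no constraint, as the relevant coefficient there is $p_n=\tfrac1n>0$. Collecting over all sizes, (CM) is equivalent to $p_s\ge0$ for every $s\in\{1,\dots,n-1\}$, i.e.\ to $\alpha_s\ge0$ for every $s\in\{1,\dots,n-1\}$, with $\alpha_n=1-\sum_{s<n}\alpha_s$ unconstrained in sign. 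This last point is worth highlighting: it explains why the (CM) class is strictly larger than the least-square class, which additionally forces $\alpha_n=0$ through (IGP).

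It then remains to match this sign condition with the definition of an ideal value. I would unpack the construction of \citet{wang2019family}, express it in the $(\psi^s)_s$ basis (or the $(p_s)_s$ basis), and verify that an ELS value is an ideal value precisely when $\alpha_1,\dots,\alpha_{n-1}\ge0$ with $\alpha_n$ free. Granting this, both inclusions are immediate: (I)$\Rightarrow p_s\ge0\Rightarrow\alpha_s\ge0\Rightarrow$(II), and conversely an ideal value has $\alpha_s\ge0$, hence $p_s\ge0$, hence satisfies (CM). The main obstacle is exactly this matching step. The characterization of (CM) via nonnegative coefficients is a short linear-algebra argument, so the real content lies in computing the coefficients of Wang's constructive ideal value explicitly and confirming that they carry the sign pattern $\alpha_s\ge0$ for $s<n$ and impose no hidden restriction on $\alpha_n$; once those coefficients are in hand, comparison with the sign condition derived above closes the proof in both directions.
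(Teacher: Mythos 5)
Your first step is correct and cleanly executed: by Lemma~\ref{Linear}, when $v$ and $w$ differ only at a coalition $T$ with $|T|=s$ and $i\in T$, linearity gives $\varphi_i(v)-\varphi_i(w)=p_s\left(v(T)-w(T)\right)$, so among ELS values (CM) is equivalent to $p_s\ge 0$ for all $s\in\{1,\dots,n-1\}$ (the case $T=N$ being vacuous since $p_n=\tfrac1n>0$); and your computation $p_s=\alpha_s\cdot\frac{n-1}{s\binom{n}{s}}$ for $s<n$, with no cross-contributions from $\psi^t$, $t\neq s$, correctly translates this into $\alpha_s\ge 0$ for $s<n$ with $\alpha_n$ unconstrained. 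But this is only a lemma, not the theorem. The statement to be proved is the equivalence of (CM) with being an \emph{ideal value}, and your argument never engages the definition of an ideal value: the decisive step is introduced with ``Granting this'' and deferred (``unpack the construction of \citet{wang2019family} \dots\ and verify that an ELS value is an ideal value precisely when $\alpha_1,\dots,\alpha_{n-1}\ge 0$''). Since ``ideal value'' is defined constructively in \citet{wang2019family} --- the present paper does not even state the definition, pointing the reader to that reference in a footnote --- the matching step you skip \emph{is} the content of the theorem, modulo your (correct) sign characterization of (CM). What you have actually proved is the conditional claim ``if the ideal values coincide with the ELS values having $\alpha_s\ge0$ for all $s<n$, then (I)$\Leftrightarrow$(II),'' and that hypothesis cannot be granted without doing the computation, nor can it be justified by citing the very theorem under proof.

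A further point of comparison: the paper itself contains no proof of this statement. It is imported verbatim as \citet[Theorem~3.2]{wang2019family} and used only to position the ideal values in Figure~\ref{fig:venn} and to support the discussion around Corollary~\ref{cor:leastsquare}; so there is no internal argument against which your route could be matched. Your coefficient-based method is consistent in spirit with how the paper handles its own results (e.g., the use of the representation $\varphi=\alpha_n ED+\sum_{s=1}^{n-1}\alpha_s\psi^s$ and the role of $\alpha_n$ in the proofs of Theorems~\ref{comp ELS} and~\ref{consistency ELS}), and your remark that (CM) leaves $\alpha_n$ free --- in contrast to (IGP)/(AC), which force $\alpha_n=0$ --- is accurate and matches the paper's Venn-diagram picture. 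To close the gap you would need to state Wang et al.'s construction of the ideal values explicitly, compute their coordinates in the $(\psi^1,\dots,\psi^{n-1},ED)$ basis, and verify both inclusions; as written, the proposal establishes only one half of the equivalence.
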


A further related subclass of the ELS values is the class of \emph{procedural values} introduced by \citet{malawski2013ijgt}. In Malawski's framework, a value is procedural if and only if it satisfies (E), (L), equal treatment of equals, (CM), and weak monotonicity. Since equal treatment of equals is equivalent to (SYM) under (E) and (L), it follows that, within the class of ELS values, the procedural values are those ELS values that satisfy (CM) and weak monotonicity. Hence, the class of procedural values is a subclass of the ideal values of \citet{wang2019family}. Its intersection with the class of affine combinations of $\{\psi^s\}_{s=1}^{n-1}$ therefore consists of those affine combinations that also satisfy weak monotonicity. For example, this intersection includes the Shapley value.

In sum, the affine combinations of $\{\psi^s\}_{s=1}^{n-1}$ and the ideal values are both subclasses of the ELS values, and both contain the class of least-square values (Figure~\ref{fig:venn}). In the notation of this paper, the family $\{\psi^s\}_{s=1}^{n}$ spans the class of ELS values, while affine combinations of $\{\psi^s\}_{s=1}^{n-1}$ form a distinguished subclass. Theorems~\ref{comp ELS} and~\ref{consistency ELS} show that, within the class of ELS values, the composition axioms and active-player consistency identify this same affine-combination class, with the egalitarian value as the only additional possibility under the composition axioms. Under (CM) and (AC), this subclass further reduces to the least-square values.

\begin{figure}[H]
\centering
\tikzset{every picture/.style={line width=0.75pt}} 

\begin{tikzpicture}[x=0.75pt,y=0.75pt,yscale=-1,xscale=1]

\draw    (30,50) -- (220,50) ;
\draw    (310,50) -- (500,50) ;
\draw    (30,50) -- (30,350) ;
\draw    (500,50) -- (500,350) ;
\draw    (30,350) -- (500,350) ;

\begin{scope}
  \clip (133.36,113.92) .. controls (146.5,110.72) and (160.49,109) .. (175,109) .. controls (251.22,109) and (313,156.57) .. (313,215.25) .. controls (313,273.93) and (251.22,321.5) .. (175,321.5) .. controls (98.78,321.5) and (37,273.93) .. (37,215.25) .. controls (37,189.28) and (49.11,165.48) .. (69.21,147.02);

  \fill[gray!12] (459.17,151.45) .. controls (479.65,169.99) and (492,194.01) .. (492,220.25) .. controls (492,278.93) and (430.22,326.5) .. (354,326.5) .. controls (277.78,326.5) and (216,278.93) .. (216,220.25) .. controls (216,161.57) and (277.78,114) .. (354,114) .. controls (375.53,114) and (395.91,117.8) .. (414.07,124.57) -- cycle ;
\end{scope}

\draw  [draw opacity=0] (133.36,113.92) .. controls (146.5,110.72) and (160.49,109) .. (175,109) .. controls (251.22,109) and (313,156.57) .. (313,215.25) .. controls (313,273.93) and (251.22,321.5) .. (175,321.5) .. controls (98.78,321.5) and (37,273.93) .. (37,215.25) .. controls (37,189.28) and (49.11,165.48) .. (69.21,147.02) -- (175,215.25) -- cycle ; 
\draw   (133.36,113.92) .. controls (146.5,110.72) and (160.49,109) .. (175,109) .. controls (251.22,109) and (313,156.57) .. (313,215.25) .. controls (313,273.93) and (251.22,321.5) .. (175,321.5) .. controls (98.78,321.5) and (37,273.93) .. (37,215.25) .. controls (37,189.28) and (49.11,165.48) .. (69.21,147.02) ;  

\draw  [draw opacity=0] (459.17,151.45) .. controls (479.65,169.99) and (492,194.01) .. (492,220.25) .. controls (492,278.93) and (430.22,326.5) .. (354,326.5) .. controls (277.78,326.5) and (216,278.93) .. (216,220.25) .. controls (216,161.57) and (277.78,114) .. (354,114) .. controls (375.53,114) and (395.91,117.8) .. (414.07,124.57) -- (354,220.25) -- cycle ; 
\draw  [color={rgb, 255:red, 0; green, 0; blue, 0 }  ,draw opacity=1 ] (459.17,151.45) .. controls (479.65,169.99) and (492,194.01) .. (492,220.25) .. controls (492,278.93) and (430.22,326.5) .. (354,326.5) .. controls (277.78,326.5) and (216,278.93) .. (216,220.25) .. controls (216,161.57) and (277.78,114) .. (354,114) .. controls (375.53,114) and (395.91,117.8) .. (414.07,124.57) ;  

\draw  [draw opacity=0] (418.43,182.47) .. controls (434.9,195.01) and (445,211.91) .. (445,230.5) .. controls (445,269.16) and (401.35,300.5) .. (347.5,300.5) .. controls (293.65,300.5) and (250,269.16) .. (250,230.5) .. controls (250,191.84) and (293.65,160.5) .. (347.5,160.5) .. controls (356.02,160.5) and (364.28,161.28) .. (372.16,162.76) -- (347.5,230.5) -- cycle ; 
\draw   (418.43,182.47) .. controls (434.9,195.01) and (445,211.91) .. (445,230.5) .. controls (445,269.16) and (401.35,300.5) .. (347.5,300.5) .. controls (293.65,300.5) and (250,269.16) .. (250,230.5) .. controls (250,191.84) and (293.65,160.5) .. (347.5,160.5) .. controls (356.02,160.5) and (364.28,161.28) .. (372.16,162.76) ;  

\draw  [fill={rgb, 255:red, 0; green, 0; blue, 0 }  ,fill opacity=1 ] (232,214.5) .. controls (232,213.12) and (233.12,212) .. (234.5,212) .. controls (235.88,212) and (237,213.12) .. (237,214.5) .. controls (237,215.88) and (235.88,217) .. (234.5,217) .. controls (233.12,217) and (232,215.88) .. (232,214.5) -- cycle ;

\draw  [fill={rgb, 255:red, 0; green, 0; blue, 0 }  ,fill opacity=1 ] (277,217.5) .. controls (277,216.12) and (278.12,215) .. (279.5,215) .. controls (280.88,215) and (282,216.12) .. (282,217.5) .. controls (282,218.88) and (280.88,220) .. (279.5,220) .. controls (278.12,220) and (277,218.88) .. (277,217.5) -- cycle ;

\draw  [fill={rgb, 255:red, 0; green, 0; blue, 0 }  ,fill opacity=1 ] (362,234.5) .. controls (362,233.12) and (363.12,232) .. (364.5,232) .. controls (365.88,232) and (367,233.12) .. (367,234.5) .. controls (367,235.88) and (365.88,237) .. (364.5,237) .. controls (363.12,237) and (362,235.88) .. (362,234.5) -- cycle ;

\draw  [fill={rgb, 255:red, 0; green, 0; blue, 0 }  ,fill opacity=1 ] (129,205.5) .. controls (129,204.12) and (130.12,203) .. (131.5,203) .. controls (132.88,203) and (134,204.12) .. (134,205.5) .. controls (134,206.88) and (132.88,208) .. (131.5,208) .. controls (130.12,208) and (129,206.88) .. (129,205.5) -- cycle ;

\draw    (270,113) -- (270,148) ;
\draw [shift={(270,150)}, rotate = 270] [color={rgb, 255:red, 0; green, 0; blue, 0 }  ][line width=0.75]    (6.56,-1.97) .. controls (4.17,-0.84) and (1.99,-0.18) .. (0,0) .. controls (1.99,0.18) and (4.17,0.84) .. (6.56,1.97)   ;

\draw (232,41) node [anchor=north west][inner sep=0.75pt]  [font=\footnotesize] [align=left] {ELS $\displaystyle ( \psi ^{s})_{s=1}^{n}$};
\draw (69,112) node [anchor=north west][inner sep=0.75pt]  [font=\footnotesize] [align=left] {affine comb\\ \ of $\displaystyle (\psi^{s}){_{s=1}^{n-1}}$};
\draw (420,131) node [anchor=north west][inner sep=0.75pt]  [font=\footnotesize] [align=left] {ideal};
\draw (374.16,165.76) node [anchor=north west][inner sep=0.75pt]  [font=\footnotesize] [align=left] {procedual};
\draw (355.5,240.5) node [anchor=north west][inner sep=0.75pt]  [font=\footnotesize] [align=left] {ED};
\draw (222,221) node [anchor=north west][inner sep=0.75pt]  [font=\footnotesize] [align=left] {CIS};
\draw (259,223) node [anchor=north west][inner sep=0.75pt]  [font=\footnotesize] [align=left] {Shapley};
\draw (108,212.4) node [anchor=north west][inner sep=0.75pt]  [font=\footnotesize]  {$2\psi^{1} -\psi^{2}$};
\draw (238,95) node [anchor=north west][inner sep=0.75pt]  [font=\footnotesize] [align=left] {least-square};

\end{tikzpicture}
\caption{\small{Relationship between subclasses of the ELS values. 
As an example of a solution that is an affine combination of $\{\psi^s\}_{s=1}^{n-1}$ but not a least-square value, let $n>2$ and consider $\varphi:=2\psi^1-\psi^2$. 
To see that $\varphi$ is not a least-square value, let $n=3$ and let $v\in\mathcal{V}$ be the unanimity game $v=u_N$. Then $\psi^1(v)=\psi^2(v)=\varphi(v)=\left(\frac13,\frac13,\frac13\right)$. Next, define a game $w \in \mathcal{V}$ by $w(\{1,2\})=1$ and $w(S)=v(S)$ for all $S\neq \{1,2\}$. Then $\psi^1(w)=\left(\frac13,\frac13,\frac13\right)$ is unchanged, whereas $\psi^2(w)=ENSC(w)=\left(\frac23,\frac23,-\frac13\right)$. Hence, $\varphi(w)=2\psi^1(w)-\psi^2(w)=(0,0,1)$. Therefore, when the worth of coalition $\{1,2\}$ increases from $0$ to $1$, the payoffs of players $1$ and $2$ fall from $\frac13$ to $0$, so $\varphi$ violates (CM). Since every least-square value satisfies (CM), it follows that $\varphi$ is not a least-square value.}}
\label{fig:venn}
\end{figure}

\section{Nullified-Game Consistency}\label{sec_nullified}

We introduce three \emph{nullified-game consistency} axioms and characterize the Shapley, CIS, and ENSC values, each via one of these axioms. In this section, we assume $n \geq 3$ to exclude trivial cases. The idea of nullified-game consistency is stated as follows. When the recommended payoffs of a subset of players are fixed and those players are hypothetically treated as null, the solution restricted to the remaining players must reproduce their original payoffs. The population of players is unchanged, however, the nullified players whose shares have been predetermined no longer contribute to coalition worth. Accordingly, when the solution is applied to the reduced game, in which the contributions of these players are nullified, it must yield exactly the original payoffs for the remaining players. Formally, for $v \in \mathcal{V}$ and $S \subseteq N$, the \textbf{$\bm{S}$-nullified game} \citep{beal2014solidarity, beal2016nullification},\footnote{See \cite{abe2023potentials} for further discussions.} denoted $v|_S \in \mathcal{V}$, is defined by 
\[
v|_S(T) = v(T \cap S) \ \text{for all} \ T \subseteq N. 
\]
Specifically, note that for $i \in N$, 
\[
v|_{\{i\}}=v(\{i\})u_{\{i\}} \ \text{and} \ v^*|_{\{i\}}=v^*(\{i\})u_{\{i\}}=(v(N)-v(N\setminus \{i\}))u_{\{i\}}.
\]
For $S\subseteq N$, $v\in\mathcal{V}$, and $x\in\mathbb{R}^N$, let $R^{S}(x,v)$ denote the reduced game on the player set $S$, obtained by nullifying the players in $N\setminus S$ at the initial allocation $x$ in game $v$.
Note that $x$ is determined by the underlying rule $\varphi$.
Analogous to variable-population consistency, reduced games admit several definitions. We consider three such definitions and, accordingly, introduce three nullified-game consistency axioms, HM-, F-, or M-nullified game consistency, each corresponding to each reduced game definition.

\noindent\textbf{HM-Nullified Game Consistency (HM-NGC)}:  For any $v \in \mathcal{V}$, $S \subseteq N$ with $|S| \ge 2$, and $i \in S$, 
\[
\varphi_i(v)=\varphi_i\left(R^{\text{HM},S}\left(\varphi(v), v\right)\right), \ \text{where}
\]
\begin{eqnarray*}
R^{\text{HM},S}(\varphi(v), v)(T) &=&
\begin{cases}
v(T\cup (N\setminus S))-\sum_{j \in N\setminus S}\varphi_j(v|_{T \cup (N\setminus S)}) &\text{if} \ T \cap S \neq \emptyset,\\
0 &\text{otherwise}.
\end{cases} 
\end{eqnarray*}
F-nullified game consistency (F-NGC) and M-nullified game consistency (M-NGC) are defined analogously, using the following reduced games:
\begin{eqnarray*}
R^{\text{F},S}(\varphi(v), v)(T) &=&
\begin{cases}
v(N)-\sum_{j \in N\setminus S}\varphi_j(v|_{\{j\}}) &\text{if} \ S \subseteq T,\\
v(T\cap S) &\text{otherwise},
\end{cases} \\
R^{\text{M},S}(\varphi(v), v)(T)&=&
\begin{cases}
v(T \cup (N\setminus S))-\sum_{j \in N\setminus S}\varphi_j(v^*|_{\{j\}}) &\text{if} \ T \cap S \neq \emptyset,\\
0 &\text{otherwise}.
\end{cases}
\end{eqnarray*}




\medskip

The idea of nullified-game consistency was introduced by \citet{kn2025el} for set-valued solution concepts. They proposed three consistency axioms and, in each case, characterized the core. In our setting, (F-NGC) and (M-NGC) are the single-valued counterparts of their \emph{Funaki nullified reduced game property} and \emph{Moulin nullified reduced game property}, respectively. The reduced game used in (F-NGC) is inspired by the projection-reduced game of \citet{funaki1996wp}, whereas that in (M-NGC) follows the complement-reduced game of \citet{moulin1985separability}.

In each case, players in $N \setminus S$ are null players in both the $S$-nullified game and the corresponding reduced games.
The distinction between these concepts lies in the role assigned to such nullified players.
In an $S$-nullified game, players in $N\setminus S$ neither generate value nor participate in any cooperation with players in \(S\). 
In contrast, in reduced games, although players in $N \setminus S$ do not generate value themselves, they may still cooperate with players in $S$ in exchange for a specified payoff.
As noted by \cite{kn2025el}, the differences can be interpreted in terms of the type of ``contract'' formed among players. 
Such contracts determine both the payoff allocated to the nullified players and the manner in which they cooperate with players in $S$.

A key distinction from the properties for set-valued solutions is how the payoffs of players in $N\setminus S$ are fixed: \citet{kn2025el} anchor them by an arbitrary selection from the solution set, whereas in our single-valued setting they are determined various ways.
More precisely, it is canonically by the single-valued solution $\varphi$, via $\varphi_j\left(v|_{\{j\}}\right)$ in (F-NGC) and $\varphi_j\left(v^*|_{\{j\}}\right)$ in (M-NGC). Finally, (HM-NGC) is a new notion inspired by the variable-population consistency property of \citet{hart1989potential}: for some $T\subseteq S$, each player $j\in N\setminus S$ is assigned $\varphi_j\left(v|_{T\cup (N\setminus S)}\right)$, i.e., the players in $N\setminus S$ act as a single block that cooperates with a subset of $S$. 
It is straightforward to verify that, if  the corresponding term in each axiom is replaced by $\varphi_i(v)$, the class of solutions satisfying these axioms is entirely different from the one characterized in the main result of this section.

The final axiom, \emph{equal gain for two players}, states that when everyone except two players $i$ and $j$ is a null player, the solution should give $i$ and $j$ the same extra amount over their stand-alone worths. In other words, if only $i$ and $j$ matter in the game, the solution treats them symmetrically by awarding them equal incremental gains above what they could secure alone.

\medskip

\noindent \textbf{Equal gain for two players (EG)}: For any $v \in \mathcal{V}$ and $i,j \in N$, if any $k \neq i,j$ is a null player in $v$, then $\varphi_i(v)-v(\{i\})=\varphi_j(v)-v(\{j\})$.

\medskip



We now present characterizations of the Shapley, CIS, and ENSC values, each obtained from one of the consistency axioms defined above. 
\begin{theorem}\label{NGC axiomatization}
A solution satisfies: 

\vspace{-0.3cm} 

\begin{itemize}\itemsep-0.1cm
\item[(I)] (E), (EG), and (HM-NGC) if and only if it is the Shapley value;
\item[(II)] (E), (EG), and (F-NGC) if and only if it is the CIS value; and 
\item[(III)] (E), (EG), and (M-NGC) if and only if it is the ENSC value. 
\end{itemize}
\end{theorem}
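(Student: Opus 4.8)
The plan is to prove each of the three equivalences by the standard two-step route: verify that the named value satisfies the three axioms, and then show that the three axioms force it. The easy direction is immediate for (E), and (EG) is a routine check for all three values (the incremental gain over the stand-alone worth is symmetric in the two relevant players). The consistency requirements can be verified by direct computation for the CIS and ENSC values; for the Shapley value, I would observe that $Sh_j(v|_A)$ equals the Shapley payoff of $j$ in the subgame on $A$ whenever $j\in A$ (both have Harsanyi dividends $d_B(v)$ for $B\subseteq A$), so that $R^{\text{HM},S}(Sh(v),v)$ restricted to $S$ is exactly the Hart--Mas-Colell reduced game and the outsiders are null players; (HM-NGC) then follows from the reduced-game property of \citet{hart1989potential}.

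The engine of the uniqueness direction is a single observation that applies uniformly. Fix $v$ and a pair $i,j$, and invoke the relevant axiom with $S=\{i,j\}$ (which is admissible since $|S|=2\ge 2$). In each of the three reduced games, every player $k\neq i,j$ is null: adjoining such a $k$ to a coalition $T$ leaves $T\cup(N\setminus\{i,j\})$ unchanged (for HM and M) or leaves both $T\cap\{i,j\}$ and the condition $\{i,j\}\subseteq T$ unchanged (for F), hence does not alter the coalition's worth. Applying (EG) inside the reduced game and then (I-NGC) yields the pairwise identity
\[
\varphi_i(v)-\varphi_j(v)=R^{\text{I},\{i,j\}}(\varphi(v),v)(\{i\})-R^{\text{I},\{i,j\}}(\varphi(v),v)(\{j\}).
\]

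For (F-NGC) and (M-NGC) this identity is already decisive, and no induction is needed. In both cases the outsider-correction terms entering $R(\{i\})$ and $R(\{j\})$ are identical and cancel, leaving $\varphi_i(v)-\varphi_j(v)=v(\{i\})-v(\{j\})$ for (F-NGC) and $\varphi_i(v)-\varphi_j(v)=v(N\setminus\{j\})-v(N\setminus\{i\})$ for (M-NGC). Since these hold for every pair, combining them with (E) gives a determined linear system whose unique solution is $CIS$ and $ENSC$, respectively.

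The Shapley case is where the work lies, because for (HM-NGC) the correction terms $\sum_{k\neq i,j}\varphi_k(v|_{N\setminus\{j\}})$ and $\sum_{k\neq i,j}\varphi_k(v|_{N\setminus\{i\}})$ do not cancel. I would therefore argue by strong induction on the number of active (non-null) players of $v$. The base cases (at most one active player) are pinned down by (E) and (EG) alone. In the inductive step, for two \emph{active} players $i,j$ the games $v|_{N\setminus\{j\}}$ and $v|_{N\setminus\{i\}}$ each have strictly fewer active players, so the inductive hypothesis determines the right-hand side and hence every difference among active players. The main obstacle is the null players: if $k$ is null then $v|_{N\setminus\{k\}}=v$, so the identity for the pair $(i,k)$ feeds back into $v$ itself rather than a simpler game. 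I would resolve this by choosing an active $i$ and substituting efficiency $\sum_{l}\varphi_l(v)=v(N)$ into the identity to eliminate the self-referential term $\sum_{l\neq i,k}\varphi_l(v)$; the identity then collapses to $\varphi_k(v)=\tfrac12\bigl(v(N\setminus\{i\})-\sum_{l\neq i,k}\varphi_l(v|_{N\setminus\{i\}})\bigr)$, whose right-hand side lives on $v|_{N\setminus\{i\}}$ and is known by induction. This fixes every null player's payoff, after which efficiency together with the active differences determines the remaining common level, giving $\varphi=Sh$.
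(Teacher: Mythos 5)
Your proposal is correct, and on the crucial part (uniqueness for the Shapley value) it takes a genuinely different route from the paper. For necessity, the paper proves that $Sh$ satisfies (HM-NGC) from scratch, via the potential $P$ and an induction establishing $P\bigl(R^{\text{HM},S}(\varphi(v),v)|_{S'}\bigr)=P\bigl(v|_{S'\cup(N\setminus S)}\bigr)-P\bigl(v|_{N\setminus S}\bigr)$; you instead reduce it to Hart--Mas-Colell consistency via the observation that $Sh_j(v|_A)$ equals the subgame Shapley payoff (sound, since $v|_A=\sum_{\emptyset\neq B\subseteq A}d_B(v)u_B$ and null players can be deleted without affecting the others' Shapley payoffs); both work, and yours is shorter given the cited theorem. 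Your pairwise identities and the efficiency closure for CIS and ENSC coincide with the paper's argument. For Shapley uniqueness, the paper inducts on the number of null players and, in its inductive step, applies (HM-NGC) once with $S=N\setminus\{i\}$ for a single active $i$, supported by two lemmas (the minimal-right property (MR) and preservation of null players under reduction); you induct on the number of active players and use only pairwise reductions $S=\{i,j\}$ and $S=\{i,k\}$. Your route buys something real here: the self-referential terms (arising because $v|_{N\setminus\{k\}}=v$ when $k$ is null) are eliminated exactly by efficiency, yielding $\varphi_k(v)=\tfrac12\bigl(v(N\setminus\{i\})-\sum_{l\neq i,k}\varphi_l(v|_{N\setminus\{i\}})\bigr)$ with a nonzero coefficient on the unknown, after which the active-player differences and (E) close the step. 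By contrast, in the paper's step the reduced game $R^{\text{HM},N\setminus\{i\}}(\varphi(v),v)$ itself contains the unknown $\varphi_i(v)$, through the entries with $T\supseteq\bigl(N\setminus\mathrm{Null}(v)\bigr)\setminus\{i\}$, where $v|_{T\cup\{i\}}=v$; writing $R=\bar R+(Sh_i(v)-\varphi_i(v))\,u_{\mathrm{Act}(v)\setminus\{i\}}$ and using linearity of $Sh$, one finds that this unknown cancels identically from the efficiency equation, so a single application of the paper's step does not pin down $\varphi_i(v)$ --- it must be run for two distinct active players (which does close, since $(a-1)^2\neq 1$ when there are $a\geq 3$ active players), or replaced by your pairwise device. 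So your proposal is not only correct but handles explicitly a delicate circularity that the paper's Case~3 passes over.
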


In variable-population frameworks, \emph{two-player standardness} is an axiom often used to obtain a characterization of a solution via consistency.\footnote{For example, \cite{hart1989potential}, \cite{driessenfunaki1997}, and \cite{vcfp2016td}.} Since our consistency is formulated for a fixed population, (EG) serves as the corresponding fixed-population analogue of two-player standardness.

We conclude this section by providing examples to demonstrate the independence of the axioms used in Theorem \ref{NGC axiomatization}. First we drop (E). Define $\varphi$ as follows: for any $v \in \mathcal{V}$ and $i \in N$, $\varphi_i(v) = v(\{i\})$. Then for any $v \in \mathcal{V}$ and $i,j \in N$ such that any $k \neq i,j$ is a null player in $v$, $\varphi_i(v) - v(\{i\}) = 0 = \varphi_j(v) - v(\{j\})$. Thus $\varphi$ satisfies (EG). For any $v \in \mathcal{V}$, $S \subseteq N$ with $|S| \geq 2$, and $i \in S$, $R^{\text{F},S}(\varphi(v),v)(\{i\}) = v(\{i\})$. Thus for any $v \in \mathcal{V}$, $S \subseteq N$ with $|S| \geq 2$, and $i \in S$, $\varphi_i\left(R^{\text{F},S}(\varphi(v),v)\right) = v(\{i\}) = \varphi_i(v)$, implying that $\varphi$ satisfies (F-NGC). On the other hand, it is trivial to verify that $\varphi$ violates (E). 

Now define $\varphi$ as follows: for any $v \in \mathcal{V}$ and $i \in N$, $\varphi_i(v) = v(N) - v(N \setminus \{i\})$. Then for any $v \in \mathcal{V}$ and $i,j \in N$ such that any $k \neq i,j$ is a null player in $v$, $\varphi_i(v) - v(\{i\}) = \left(v(\{i,j\}) - v(\{j\})\right) - v(\{i\}) = \left(v(\{i,j\}) - v(\{i\})\right) - v(\{j\}) = \varphi_j(v) - v(\{j\})$. Thus $\varphi$ satisfies (EG). For any $v \in \mathcal{V}$ and $j \in N$, 
\[
\varphi_j\left(v^*|_{\{j\}}\right) = v^*|_{\{j\}}(N) - v^*|_{\{j\}}(N \setminus \{j\}).
\]
Because
\[
v^*|_{\{j\}}(N \setminus \{j\}) = \left(v(N) - v(N \setminus \{j\})\right)u_{\{j\}}(N \setminus \{j\}) = 0,
\]
we have 
\[
\varphi_j\left(v^*|_{\{j\}}\right) = v(N) - v(N \setminus \{j\}). 
\]
Then, by definition of $\varphi$, for any $\text{I} \in \{\text{HM},\text{M}\}$, $v \in \mathcal{V}$, $S \subseteq N$ with $|S| \geq 2$, and $i \in S$, 
\begin{eqnarray*}
\varphi_i\left(R^{\text{I}, S}\left(\varphi(v),v\right)\right) &=& R^{\text{I},S}\left(\varphi(v),v\right)(N) - R^{\text{I},S}\left(\varphi(v),v\right)\left(N \setminus \{i\}\right) \\
&=& v(N) - \sum_{j \in N \setminus S} \varphi_j\left(v^*|_{\{j\}}\right) - \left(v(N \setminus \{i\}) - \sum_{j \in N \setminus S} \varphi_j\left(v^*|_{\{j\}}\right)\right) \\
&=& v(N) - v(N \setminus \{i\}) = \varphi_i(v). 
\end{eqnarray*}
Therefore, it satisfies (HM-NGC) and (M-NGC). On the other hand, it is again trivial to verify that $\varphi$ violates (E). 

Second we drop (EG). Define $\varphi$ as follows: there is $i \in N$ such that for any $v \in \mathcal{V}$, $\varphi_i(v) = v(N)$, and for any $j \neq i$, $\varphi_j(v) = 0$. Then it is trivial to verify that while $\varphi$ satisfies (E), it violates (EG). To see that $\varphi$ satisfies (I-NGC), where $\text{I} \in \{\text{HM},\text{F}, \text{M}\}$, consider $v \in \mathcal{V}$ and $S \subseteq N$ with $|S| \geq 2$. Then if $i \in S$, then $R^{\text{I},S}(\varphi(v),v)(N) = v(N)$. Thus $\varphi_i\left(R^{\text{I},S}(\varphi(v),v)\right) = v(N) = \varphi_i(v)$, and for any $j \in S \setminus \{i\}$, $\varphi_j\left(R^{\text{I},S}(\varphi(v),v)\right) = 0 = \varphi_j(v)$. If $i \notin S$, then for any $j \in S$, $\varphi_j\left(R^{\text{I},S}(\varphi(v),v)\right) = 0 = \varphi_j(v)$. Therefore, $\varphi$ satisfies (I-NGC). 

Finally, if any one of (HM-NGC), (F-NGC), or (M-NGC) is dropped, the remaining axioms (E) and (EG) are still satisfied by the other two values. Specifically:

\vspace{-0.3cm}

\begin{itemize}\itemsep-0.1cm
\item without (HM-NGC), the CIS and ENSC values satisfy (E) and (EG);
\item without (F-NGC), the Shapley and ENSC values satisfy (E) and (EG); and
\item without (M-NGC), the Shapley and CIS values satisfy (E) and (EG).
\end{itemize}

\section{Conclusion}\label{sec_conclude}
This paper studies solutions for TU-games from a fixed-population perspective. We keep the set of players fixed and ask how a solution should behave when parts of the outcome are treated as already settled. The guiding idea is that certain components of the allocation can be fixed in advance, for example, by provisional divisions of worth, by guaranteed terms for some players, or by commitments that render some players effectively neutral, and the solution is then required to handle the remaining part of the problem in a disciplined way.

Within this framework, we obtain structural results for the ELS values. We show that every solution in this class can be represented as a Shapley value applied to a transformed game, and we use several invariance requirements to locate familiar solutions within that structure, including the Shapley, CIS, ENSC, and egalitarian values, as well as least-square values. Rather than treating these solutions as isolated objects, the analysis explains how they arise from different ways of holding parts of the allocation fixed and resolving what remains.

This perspective has two consequences. First, it offers a rationale for benchmark solutions in environments where the relevant group of players is effectively given (for instance, divisions within firms, committees, or jurisdictions), so that the question is not who participates, but how the joint surplus is divided. Second, it suggests a way to assess alternative solutions: not only by standard axioms such as efficiency or symmetry, but also by how they respond when some payoffs are locked in and the rest of the problem must still be solved.

A natural direction for future research is to move beyond the ELS values and ask to what extent these fixed-population requirements can serve as a systematic guideline for designing desirable solutions more generally.

\begin{center}
\Large{{\bf Appendix: Omitted Proofs}}
\end{center}

\begin{appendix}

\numberwithin{theorem}{section}
\numberwithin{lemma}{section}
\numberwithin{proposition}{section}
\numberwithin{example}{section}
\renewcommand{\theequation}{\thesection.\arabic{equation}}
\appendix




\section{Proofs of Results in Section \ref{sec_sh}}\label{appendix: sec_sh}
\begin{proof}[Proof of Lemma \ref{SYML sigma-sh}]
Fix $\sigma: \{1,\ldots,n\} \to \mathbb{R}$ and consider the $\sigma$-Shapley value. By definition, it satisfies (L) and (SYM).
Moreover, for any $v \in \mathcal{V}$ and $i \in N$,
\begin{eqnarray*}
\sigma\text{-}Sh_i(v) = Sh_i(v_\sigma)&=&\sum_{S \subseteq N: i \notin S}\frac{|S|!(n-|S|-1)!}{n!} \left( \sigma(|S|+1)v(S\cup \{i\})-\sigma(|S|)v(S) \right)\\
&=&\sum_{S \subseteq N: i \in S}\frac{(|S|-1)!(n-|S|)!}{n!}\sigma(|S|)v(S)-\sum_{S \subseteq N: i \notin S}\frac{|S|!(n-|S|-1)!}{n!}\sigma(|S|)v(S).
\end{eqnarray*}
It follows that the $\sigma$-Shapley value admits the representation in Lemma \ref{Linear} (ii), with coefficients satisfying $q_k = -\frac{k}{n-k}p_k$ for all $k=1,\ldots, n-1$. 

Conversely, suppose that $\varphi$ satisfies (L) and (SYM), and that in the representation given in Lemma~\ref{Linear} (ii), the coefficients satisfy $q_k = -\frac{k}{\,n-k\,} p_k$ for all $k = 1,\ldots,n-1$. By Lemma \ref{Linear} (ii),
\begin{eqnarray*}
\varphi_i(v)&=&\sum_{S \subseteq N: i \in S}   p_sv(S)+\sum_{S \subseteq N: i \notin S}  q_sv(S)\\
&=&\sum_{S \subseteq N: i \notin S}\frac{|S|!(n-|S|-1)!}{n!}np_s\left(\frac{(n-1)!}{|S|!(n-|S|-1)!}v(S \cup \{i\})-\frac{(n-1)!}{(|S|-1)!(n-|S|)!}v(S) \right).
\end{eqnarray*}
Let $\sigma(s):=n\binom{n-1}{s-1} p_s$ for $s=1,\ldots,n-1$. 
Then the representation becomes 
\[
\varphi_i(v)=\sum_{S \subseteq N: i \notin S}\frac{|S|!(n-|S|-1)!}{n!}\left(\sigma(|S|+1)v(S \cup \{i\})-\sigma(|S|)v(S) \right)=Sh_i(v_\sigma).
\]
\end{proof}

\section{Proofs of Results in Section \ref{sec_consistency}}\label{appendix:sec_consistency}

\begin{proof}[Proof of Lemma \ref{IGP_s}]
Fix $s \in \{1,\ldots,n-1\}.$ Recall that  $\psi ^{s}$ is defined as follows: for any $v \in \mathcal{V}$ and $i \in N$, 
\[
\psi _{i}^{s}(v)=\frac{v\left(N\right) }{n}+\frac{n-1}{s}\left( \frac{%
\sum_{S:\left|S\right| =s}v\left( S\right) }{\binom{n}{s}}-\frac{%
\sum_{S:\left\vert S\right\vert =s, i \notin S}v\left( S\right) }{\binom{n-1}{%
s}}\right). 
\]
By denoting $V^+_i = \sum_{S:\left\vert S\right\vert =s: i \in S}v\left( S\right) $ and $V^-_i = \sum_{S:\left\vert S\right\vert =s: i \notin S}v\left( S\right)$, $\psi ^{s}$ can be written as:
\begin{eqnarray*}
\psi _{i}^{s}(v) &=&\frac{v\left( N\right) }{n}+\frac{n-1}{s}\frac{1}{\binom{%
n}{s}}\left( V_{i}^{+}+V_{i}^{-}-\frac{\binom{n}{s}}{\binom{n-1}{s}}%
V_{i}^{-}\right)  \\
&=&\frac{v\left( N\right) }{n}+\frac{n-1}{s}\frac{1}{\binom{n}{s}}\left(
V_{i}^{+}-\frac{s}{n-s}V_{i}^{-}\right)  \\
&=&\frac{v\left( N\right) }{n}+\frac{n-1}{n}\frac{1}{\binom{n-1}{s-1}}\left(
V_{i}^{+}-\frac{s}{n-s}V_{i}^{-}\right) .
\end{eqnarray*}%
Let $x \in \mathbb{R}^n$ and $v=\widehat{x}$. Then, we have $\frac{v(N)}{n}=\overline{x}$ and 
\begin{eqnarray*}
V_{i}^{+}-\frac{s}{n-s}V_{i}^{-}&=& \binom{n-1}{s-1}x_i+\binom{n-2}{s-2}\sum_{j \neq i}x_j-\frac{s}{n-s} \binom{n-2}{s-1}\sum_{j \neq i}x_j\\
&=&  \binom{n-1}{s-1}\Bigl(x_i-\frac{n}{n-1}\sum_{j \neq i}x_j    \Bigr)\\
&=& \binom{n-1}{s-1}\frac{n}{n-1}(x_i-\overline{x}).
\end{eqnarray*}
This implies that $\varphi^s_i(\widehat{x})=\overline{x}+ (x_i-\overline{x})=x_i$.
\end{proof}

\medskip

\begin{proof}[Proof of Lemma \ref{IGP_cons}]
(i) By definition of $U$, $U(\varphi(v),v)=v-\widehat{\varphi(v)}$. Letting $t=v(N)$ in (CU), $\varphi(v)=\varphi(v)+\varphi(v-\widehat{\varphi(v)})$, and thus $\varphi(v-\widehat{\varphi(v)})=0$. By (L), $
\varphi(v)=\varphi(\widehat{\varphi(v)}).$

\medskip

\noindent (ii) and (iii) First, (E) implies that 
$D_O(\varphi(v),v)
=D_I(\varphi(v),v)
=\widehat{\varphi(v)}.$
Hence, by letting $t=v(N)$ in (CD$_{\text{O}}$) and (CD$_{\text{I}}$), we have
$\varphi(v)
=\varphi(D_O(\varphi(v)))
=\varphi(D_I(\varphi(v)))
=\varphi(\widehat{\varphi(v)}).$
\end{proof}

\medskip

\begin{proof}[Proof of Lemma \ref{AC_IGP}]
Suppose that $\varphi$ satisfies (E), (L), and (AC). By (L) and (AC), for any $v \in \mathcal{V}$, $S \subsetneq N$, and $i \in S$, $\varphi_i\left(\widehat{\varphi^S(v)}\right) = 0$, where $\varphi^S(v) = \left((0)_{i \in S}, (\varphi_i(v))_{i \notin S}\right) \in \mathbb{R}^n$. This together with (L) implies that for any $v \in \mathcal{V}$ and $i \in N$,
\[
\varphi_i\left(\widehat{\varphi(v)}\right) = \sum_{k \in N} \varphi_i\left(\widehat{\varphi^{N \setminus \{k\}}(v)}\right) = \varphi_i\left(\widehat{\varphi^{N \setminus \{i\}}(v)}\right). 
\]
By (E) and the fact that $\varphi_k\left(\widehat{\varphi^{N \setminus \{i\}}(v)}\right) = 0$ for all $k \neq i$, we have 
\[
\varphi_i\left(\widehat{\varphi^{N \setminus \{i\}}(v)}\right) = \varphi_i(v). 
\]
Thus $\varphi_i\left(\widehat{\varphi(v)}\right) = \varphi_i(v)$, implying that $\varphi$ satisfies (RNP). 

\end{proof}

\section{Proof of Theorem \ref{NGC axiomatization}}
For necessity, it suffices to show that the Shapley, CIS, and ENSC values satisfy their respective nullified-game consistency axioms.
\begin{lemma}
The following statements hold. 

\vspace{-0.3cm}

\begin{itemize}\itemsep-0.1cm
\item[(i)] The Shapley value satisfies (HM-NGC).
\item[(ii)] The CIS value satisfies (F-NGC).
\item[(iii)] The ENSC value satisfies (M-NGC). 
\end{itemize}
\end{lemma}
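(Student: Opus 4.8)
The plan is to treat the three statements separately: (ii) and (iii) by direct computation, and (i) by reducing the fixed-population nullified reduced game to the classical variable-population reduced game of \citet{hart1989potential} and invoking the known consistency of the Shapley value. The common starting point is to evaluate how the single-coalition nullified games behave. Since $v|_{\{j\}}=v(\{j\})u_{\{j\}}$, a one-line computation gives $CIS_j(v|_{\{j\}})=v(\{j\})$; dually, because $v^*|_{\{j\}}=(v(N)-v(N\setminus\{j\}))u_{\{j\}}$, one obtains $ENSC_j(v^*|_{\{j\}})=v(N)-v(N\setminus\{j\})$. These identities fix the constants subtracted in the F- and M-reduced games.

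For (ii), fix $S$ with $|S|\ge 2$ and $i\in S$, and set $w=R^{\text{F},S}(CIS(v),v)$. The identity above gives $w(N)=v(N)-\sum_{j\notin S}v(\{j\})$. For singletons, since $|S|\ge 2$ forces $S\not\subseteq\{k\}$, the second branch of $R^{\text{F},S}$ applies, so $w(\{k\})=v(\{k\}\cap S)$, which equals $v(\{k\})$ for $k\in S$ and $v(\emptyset)=0$ for $k\notin S$. Substituting into $CIS_i(w)=w(\{i\})+\tfrac1n(w(N)-\sum_{k}w(\{k\}))$, the subtracted singleton worths $\sum_{j\notin S}v(\{j\})+\sum_{k\in S}v(\{k\})$ recombine to $\sum_{k\in N}v(\{k\})$, yielding $CIS_i(w)=CIS_i(v)$.

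For (iii), set $w=R^{\text{M},S}(ENSC(v),v)$ and $C=\sum_{j\notin S}(v(N)-v(N\setminus\{j\}))$. I would compute $w(N)=v(N)-C$ and $w(N\setminus\{k\})$ by cases: for $k\in S$ one has $(N\setminus\{k\})\cup(N\setminus S)=N\setminus\{k\}$, so $w(N)-w(N\setminus\{k\})=v(N)-v(N\setminus\{k\})$; for $k\notin S$ one has $(N\setminus\{k\})\cup(N\setminus S)=N$, so $w(N)-w(N\setminus\{k\})=0$. Feeding these into $ENSC_i(w)=[w(N)-w(N\setminus\{i\})]+\tfrac1n[w(N)-\sum_k(w(N)-w(N\setminus\{k\}))]$ and using $C+\sum_{k\in S}(v(N)-v(N\setminus\{k\}))=\sum_{k\in N}(v(N)-v(N\setminus\{k\}))$, the terms recombine to give $ENSC_i(w)=ENSC_i(v)$. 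This is the exact dual of (ii).

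Statement (i) is the substantive one, and the plan rests on two facts. First, for any $R\subseteq N$ every player outside $R$ is null in $v|_R$ (adjoining such a player never changes $v(\cdot\cap R)$), so by the standard null-player invariance of the Shapley value, $Sh_j(v|_R)$ equals the Shapley value of the subgame of $v$ on $R$ for $j\in R$. Second, in $w=R^{\text{HM},S}(Sh(v),v)$ every $k\in N\setminus S$ is itself a null player: adjoining $k$ changes neither $T\cup(N\setminus S)$ nor whether $T\cap S=\emptyset$, so $w(T\cup\{k\})=w(T)$. Granting the second fact, the null-player invariance gives $Sh_i(w)=Sh_i^{S}(w|_S)$ for $i\in S$, where $w|_S$ is the restriction to $S$. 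Using the first fact to rewrite each $Sh_j(v|_{T\cup(N\setminus S)})$ as a Shapley value on $T\cup(N\setminus S)$, one sees that $w|_S$ is literally the Hart--Mas-Colell reduced game of $v$ on $S$ at the allocation $Sh(v)$; the consistency of the Shapley value then yields $Sh_i(w)=Sh_i(v)$. The main obstacle is precisely this bookkeeping step: checking that the coalition worths $v(T\cup(N\setminus S))$ and the subtracted Shapley terms align, once the nullified players are removed, with the variable-population definition, so that the cited consistency result can be applied verbatim. I expect (ii) and (iii) to be routine once the single-coalition identities are established.
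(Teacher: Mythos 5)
Your proposal is correct, and parts (ii) and (iii) coincide with the paper's own argument: the paper likewise computes $CIS_j(v|_{\{j\}})=v(\{j\})$ and $ENSC_j(v^*|_{\{j\}})=v(N)-v(N\setminus\{j\})$, evaluates the singleton (resp.\ complement) worths of the reduced game, and substitutes into the explicit formulas. Part (i), however, is where you genuinely diverge. The paper stays entirely inside the fixed-population framework: it invokes the representation $Sh_i(v)=P(v)-P\bigl(v|_{N\setminus\{i\}}\bigr)$ via the potential $P$ (citing \citealp{abe2023potentials}) and proves, by induction on $|S'|$, the identity $P\bigl(R^{\text{HM},S}(\varphi(v),v)|_{S'}\bigr)=P\bigl(v|_{S'\cup(N\setminus S)}\bigr)-P\bigl(v|_{N\setminus S}\bigr)$, from which (HM-NGC) follows by telescoping. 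You instead observe that every $k\in N\setminus S$ is a null player in $R^{\text{HM},S}(Sh(v),v)$ and that players outside $R$ are null in $v|_R$, apply the null-players-out property of the Shapley value to pass to the player set $S$, identify the restricted game with the classical Hart--Mas-Colell reduced game (note it does not actually depend on the allocation $Sh(v)$, only on the Shapley values of subgames), and then invoke Hart--Mas-Colell consistency. Both routes are sound; the trade-off is that your argument is shorter and conceptually transparent, because it reveals that (HM-NGC) for the Shapley value is exactly variable-population HM-consistency in disguise, but it imports two external classical results (null players out, and HM consistency of the Shapley value), whereas the paper's induction is self-contained given the fixed-population potential formula and never leaves the fixed player set $N$, which is in keeping with the paper's stated methodological point of working without variable populations.
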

\begin{proof}
(i) We show that, for any $v \in \mathcal{V}$ and $S \subseteq N$ with $|S|\ge2$, 
\begin{equation}
P\left(R^{\text{HM},S}(\varphi(v), v)|_{S'}\right)=P\left(v|_{S' \cup (N\setminus S)}\right)-P\left(v|_{N\setminus S}\right)~\forall S' \subseteq N,  \label{potential eq}
\end{equation}
where $P: \mathcal{V} \to \mathbb{R}$ is the potential function of the Shapley value 
$$
P(v)=\sum_{S\subseteq N}\frac{(|S|-1)!(n-|S|)!}{n!}v(S).
$$
\cite{abe2023potentials} show that the Shapley value is uniquely represented as $Sh_i(v)=P(v)-P\left(v|_{N\setminus \{i\}}\right)$.\footnote{The potential function of TU-games is originally defined by \cite{hart1989potential}. For recent studies, see \cite{ortmann1998conservation}, \cite{casajus2014potential}, \cite{casajus2015potential, casajus2018decomposition}, and references theirn.}
Then by this formula and (\ref{potential eq}), we can see that, for any $i \in S$,
\begin{eqnarray*}
\varphi_i\left(R^{\text{HM},S}(\varphi(v),v)\right)&=&P\left(R^{\text{HM},S}(\varphi(v), v)\right)-P\left(R^{\text{HM},S}(\varphi(v), v)|_{N\setminus \{i\}}\right)\\
&=& \left(P(v)-P(v|_{N\setminus S})\right)-\left(P(v|_{N\setminus \{i\}})-P(v|_{N\setminus S})  \right)\\
&=&P(v)-P\left(v|_{N\setminus \{i\}}\right)\\
&=&\varphi_i(v).
\end{eqnarray*}

Now, we show (\ref{potential eq}) by induction.
Take any $v \in \mathcal{V}$ and suppose that $|S'|=1$.
Let us denote $S'=\{i_1\}$ for some $i_1 \in N$.
Then, by definition, $R^{\text{HM},S}\left(\varphi(v), v\right)|_{\{i_1\}}(N)=\varphi_{i_1}\left(v|_{\{i_1\} \cup (N\setminus S)}\right)$.
Since any player $j \neq i_1$ is null in $R^{\text{HM},S}(\varphi(v), v)|_{\{i_1\}}$, we have
\begin{align*}
P\left(R^{\text{HM},S}(\varphi(v), v)|_{\{i_1\}}\right)-P\left(R^{\text{HM},S}(\varphi(v), v)|_{\emptyset}\right)&=\varphi_{i_1}(R^{\text{HM},S}\left(\varphi(v), v)|_{\{i_1\}}\right)\\
&=\varphi_{i_1}\left(v|_{\{i_1\} \cup (N\setminus S)}\right)\\
&=P(v|_{\{i_1\} \cup (N\setminus S)})-P(v|_{N\setminus S}). 
\end{align*}
Then, because $P\left(R^{\text{HM},S}(\varphi(v), v)|_{\emptyset}\right)=0$, we have  $P\left(R^{\text{HM},S}(\varphi(v), v)|_{\{i_1\}}\right)=P\left(v|_{\{i_1\} \cup (N\setminus S)})-P(v|_{N\setminus S}\right)$.

Suppose that \eqref{potential eq} holds for any $S' \subseteq N$ with $|S'| \le k \in \{1, \ldots, n-1\}$. Let $S'$ with $|S'|=k+1$.
By definition, $R^{\text{HM},S}\left(\varphi(v), v\right)|_{S'}(N)=\sum_{j \in S'}\varphi_j\left(v|_{S' \cup (N\setminus S)}\right)$.
Moreover, since any player $j \notin S'$ is null in $R^{\text{HM},S}(\varphi(v), v)|_{S'}$, we have $\sum_{j \in S'}\varphi_j\left(R^{\text{HM},S}(\varphi(v), v)|_{S'}\right)=R^{\text{HM},S}(\varphi(v), v)|_{S'}(N)$.
By the induction hypothesis, rearranging these two equations leads to 
\begin{eqnarray*}
&&|S'|P\left(R^{\text{HM},S}(\varphi(v), v)|_{S'}\right)-\sum_{j \in S'}P\left(R^{\text{HM},S}(\varphi(v), v)|_{S'\setminus \{j\}}\right)=|S'|P\left(v|_{S' \cup (N\setminus S)}\right)-\sum_{j \in S'}P(v|_{(S' \setminus \{j\}) \cup (N\setminus S)}))\\
&\Leftrightarrow& |S'|P\left(R^{\text{HM},S}(\varphi(v), v)|_{S'}\right)=|S'|P\left(v|_{S' \cup (N\setminus S)}\right)+\sum_{j \in S'}\left(P\left(R^{\text{HM},S}(\varphi(v), v)|_{S'\setminus \{j\}}\right)- P\left(v|_{(S' \setminus \{j\}) \cup (N\setminus S)}\right)\right)\\
&\Leftrightarrow&|S'|P\left(R^{\text{HM},S}(\varphi(v), v)|_{S'}\right)=|S'|P(v|_{S' \cup (N\setminus S)})-|S'|P(v|_{N\setminus S}). 
\end{eqnarray*}
Thus \eqref{potential eq} holds for $|S'|=k+1$.

\medskip

\noindent (ii) Take any $v \in \mathcal{V}$ and $S \subseteq N$ with $|S| \geq 2$.
By definition, we have
\[
R^{\text{F},S}(CIS(v), v)(\{i\})=
\begin{cases}
v(\{i\}) &\text{if} \ i \in S,\\
0 &\text{otherwise}.
\end{cases}
\]
Moreover, 
$$
R^{\text{F},S}(CIS(v), v)(N)=v(N)-\sum_{j \in N\setminus S}CIS_j(v|_{\{j\}})=v(N)-\sum_{j \in N\setminus S}v(\{j\})CIS_j(u_{\{j\}})=v(N)-\sum_{j \in N \setminus S}v(\{j\}).
$$
Therefore, for any $i \in S$,
\begin{eqnarray*}
CIS_i(R^{\text{F},S}(CIS(v), v))&=&R^{\text{F},S}(CIS(v), v)(\{i\})+\frac{1}{n}\left(R^{\text{F},S}(CIS(v), v)(N)-\sum_{j \in N}R^{\text{F},S}(CIS(v), v)(\{j\})\right)\\
&=&v(\{i\})+\frac{1}{n}\left(v(N)-\sum_{j \in N \setminus S}v(\{j\})-\sum_{j \in S}v(\{j\})\right)\\
&=&CIS_i(v).
\end{eqnarray*}

\medskip

\noindent (iii) Take any $v \in \mathcal{V}$ and $S \subseteq N$ with $|S| \geq 2$.
By definition, for any $j \in N \setminus S$, 
$$
ENSC_j\left(v^*|_{\{j\}}\right)=\left(v(N)-v(N\setminus \{j\})\right)ENSC_j(u_{\{j\}})=v(N)-v(N\setminus \{j\}).
$$
Hence, $R^{\text{M},S}(ENSC(v), v)(N)=v(N)-\sum_{j \in N\setminus S}\left(v(N)-v(N\setminus \{j\})  \right)$ and 
\[
R^{\text{M},S}(\varphi(v), v)(N\setminus \{j\})=
\begin{cases}
v(N\setminus \{j\})-\sum_{j \in N\setminus S}\left( v(N)-v(N\setminus \{j\})\right) &\text{if} \ j \in S\\
R^{\text{M},S}(\varphi(v), v)(N) &\text{otherwise}.
\end{cases}
\]
Therefore, for any $i \in S$,
\begin{align*}
&ENSC_i\left(R^{\text{M},S}(ENSC(v), v)\right)\\
&\hspace{2.5cm}=\left(R^{\text{M},S}(\varphi(v), v)(N)-R^{\text{M},S}(\varphi(v), v)\left(N\setminus \{i\}\right)\right)\\
&\hspace{3cm}+\frac{1}{n}\left(R^{\text{M},S}(\varphi(v), v)(N)-\sum_{j \in N}\left(R^{\text{M},S}(\varphi(v), v)(N)-R^{M,S}(\varphi(v), v)(N\setminus \{j\}) \right)\right)\\
&\hspace{2.5cm}=v(N)-v(N\setminus \{i\})+\frac{1}{n}\left(v(N)-\sum_{j \in N}\left(v(N)-v(N\setminus \{i\})\right) \right)\\
&\hspace{2.5cm}=ENSC_i(v).
\end{align*}
\end{proof}

The sufficiency of (I) requires two auxiliary lemmas, stated below.

\noindent \textbf{Minimal Right (MR)}:  For any $v \in \mathcal{V}$ and $i \in N$, if any $j \neq i$ is null in $v$, then $\varphi_i(v)=v(N)$.

\begin{lemma}\label{COV_EG_MG}
(E) and (EG) together imply (MR). 
\end{lemma}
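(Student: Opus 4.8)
The plan is to fix $v\in\mathcal{V}$ and a player $i\in N$ such that every $j\neq i$ is a null player in $v$, and to show directly that $\varphi_i(v)=v(N)$. The argument combines a purely game-theoretic observation about null players with one application of (EG) for each of the other players, closed off by (E).

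First I would record the structural consequence of the hypothesis. If $j$ is null in $v$, then $v(S\cup\{j\})=v(S)$ for every $S\subseteq N\setminus\{j\}$; in particular $v(\{j\})=v(\emptyset)=0$, and deleting a null player from any coalition leaves its worth unchanged. Since every player in $N\setminus\{i\}$ is null, iterating this deletion gives $v(N)=v(\{i\})$ (and more generally $v(T)=v(T\cap\{i\})$ for all $T$). This identity $v(N)=v(\{i\})$ is the fact that will make the final computation collapse to $v(N)$.

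Next I would apply (EG). For any $j\neq i$, every player $k\neq i,j$ lies in $N\setminus\{i\}$ and is therefore null, so (EG) applies to the pair $\{i,j\}$ and yields $\varphi_i(v)-v(\{i\})=\varphi_j(v)-v(\{j\})$. Using $v(\{j\})=0$, this simplifies to $\varphi_j(v)=\varphi_i(v)-v(\{i\})$ for every $j\neq i$, so all players other than $i$ receive a common payoff determined by $\varphi_i(v)$.

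Finally I would invoke (E). Summing over $N$ gives $\varphi_i(v)+(n-1)\bigl(\varphi_i(v)-v(\{i\})\bigr)=v(N)$, i.e.\ $n\varphi_i(v)-(n-1)v(\{i\})=v(N)$, whence $\varphi_i(v)=\tfrac{1}{n}\bigl(v(N)+(n-1)v(\{i\})\bigr)$. Substituting $v(\{i\})=v(N)$ from the first step collapses this to $\varphi_i(v)=v(N)$, which is exactly (MR). No step is genuinely hard; the only point requiring care is the first one—recognizing that the null-player hypothesis forces $v(\{i\})=v(N)$, without which the efficiency computation would leave a residual term in $v(\{i\})$.
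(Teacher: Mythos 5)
Your proof is correct and follows essentially the same route as the paper's: apply (EG) to each pair $\{i,j\}$ to get $\varphi_j(v)=\varphi_i(v)-v(\{i\})$ (using $v(\{j\})=0$ for null $j$), sum via (E), and collapse with the null-player identity $v(N)=v(\{i\})$. The only difference is cosmetic ordering of when that identity is invoked.
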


\begin{proof}

Let $v\in \mathcal{V}$ and $i \in N$ be such that $N\setminus \text{Null}(v)=\{i\}$.
For $j \neq i$, by (EG), 
$$
\varphi_i(v)-v(\{i\})=\varphi_j(v)-v(\{j\})=\varphi_j(v).
$$
Then by (E), 
\begin{eqnarray*}
v(N) = \sum_{j \in N}\varphi_j(v)&=& \varphi_i(v) + \sum_{j \neq i}\varphi_j(v) \\
&=&  \varphi_i(v)+\sum_{j \neq i}\left(\varphi_i(v)-v(\{i\}\right)) \\
&=& \varphi_i(v) + (n-1)\left(\varphi_i(v) - v(\{i\})\right).  \end{eqnarray*}
Because any $j \neq i$ is null in  $v$, $v(N) = v(\{i\})$. This implies that 
$$
v(\{i\}) = \varphi_i(v) + (n-1)\left(\varphi_i(v) - v(\{i\})\right) \Leftrightarrow \varphi_i(v) = v(\{i\}). 
$$
Overall, $\varphi_i(v) = v(N)$. 
\end{proof}


\begin{lemma}\label{null player set HM}
Suppose that $\varphi$ satisfies (E), (MR), (HM-NGC). If player $i \in N$ is null in $v$, then player $i$ is also null in $R^{\text{HM}, N\setminus \{j\}}(\varphi(v), v)$ for any $j \neq i$.
\end{lemma}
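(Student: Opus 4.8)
The plan is to write the reduced game out explicitly for the particular coalition $S=N\setminus\{j\}$ and then verify the nullity conditions branch by branch. Since $N\setminus S=\{j\}$, the single outside player is $j$ and the sum over $N\setminus S$ in the definition collapses to one term. Writing $w:=R^{\text{HM},\,N\setminus\{j\}}(\varphi(v),v)$, the definition gives $w(T)=v(T\cup\{j\})-\varphi_j\!\left(v|_{T\cup\{j\}}\right)$ whenever $T\cap(N\setminus\{j\})\neq\emptyset$ (equivalently $T\notin\{\emptyset,\{j\}\}$), and $w(T)=0$ otherwise. To show that $i$ is null in $w$, I must verify $w(T\cup\{i\})=w(T)$ for every $T\subseteq N\setminus\{i\}$.

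The central observation — the step I expect to carry the argument — is that nullity of $i$ in $v$ makes the nullified game insensitive to $i$: if $i\notin A$ and $i$ is null in $v$, then $v|_{A\cup\{i\}}=v|_A$ as games. Indeed, for any $T'\subseteq N$ the set $T'\cap(A\cup\{i\})$ equals either $T'\cap A$ or $(T'\cap A)\cup\{i\}$, and since $i\notin A$ and $i$ is null, $v$ assigns the same worth in both cases; hence $v(T'\cap(A\cup\{i\}))=v(T'\cap A)$ for all $T'$. Applying this with $A=T\cup\{j\}$ settles every $T$ with $T\cap(N\setminus\{j\})\neq\emptyset$: the worth term obeys $v(T\cup\{i\}\cup\{j\})=v(T\cup\{j\})$ (again by nullity of $i$), while the guaranteed-insider term obeys $\varphi_j(v|_{T\cup\{i\}\cup\{j\}})=\varphi_j(v|_{T\cup\{j\}})$ because the two nullified games literally coincide. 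Thus $w(T\cup\{i\})=w(T)$ on this range, with no appeal to any further axiom.

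It remains to treat the boundary coalitions $T=\emptyset$ and $T=\{j\}$, where $w(T)=0$ by definition but $w(T\cup\{i\})$ is computed by the first branch, since $i\neq j$ forces $(T\cup\{i\})\cap(N\setminus\{j\})\neq\emptyset$. In both cases $w(T\cup\{i\})=v(\{i,j\})-\varphi_j(v|_{\{i,j\}})$. Here I invoke the coincidence identity once more, with $A=\{j\}$, to obtain $v|_{\{i,j\}}=v|_{\{j\}}=v(\{j\})\,u_{\{j\}}$, a game in which every player other than $j$ is null; then (MR) yields $\varphi_j(v|_{\{j\}})=v(\{j\})$, while nullity of $i$ gives $v(\{i,j\})=v(\{j\})$, so $w(T\cup\{i\})=v(\{j\})-v(\{j\})=0=w(T)$. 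Combining the two ranges shows $w(T\cup\{i\})=w(T)$ for all admissible $T$, i.e.\ $i$ is null in $w$. The only genuinely delicate point is the nullified-game coincidence $v|_{A\cup\{i\}}=v|_A$; once it is in hand, the verification is routine, and in fact the proof uses only nullity of $i$ together with (MR), with (E) and (HM-NGC) not needed.
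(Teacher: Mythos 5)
Your proof is correct and follows essentially the same route as the paper's: expand $R^{\text{HM},N\setminus\{j\}}(\varphi(v),v)$ branch by branch, use nullity of $i$ in $v$ (via the coincidence $v|_{A\cup\{i\}}=v|_{A}$ for $i\notin A$, which the paper uses implicitly) to handle coalitions in the generic branch, and invoke (MR) together with $v|_{\{i,j\}}=v|_{\{j\}}$ to kill the boundary cases. If anything, your treatment is slightly more careful than the paper's, which folds $T=\emptyset$ into its generic branch even though the stated identity there already presupposes (MR); your closing observation that only (MR) and nullity of $i$ are actually used, with (E) and (HM-NGC) idle, is also accurate.
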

\begin{proof}
Let $v \in \mathcal{V}$ and $i \in N$ be such that player $i$ is null in $v$.
For any $j \neq i$ and $T \subseteq N \setminus \{i\}$,
\[
R^{\text{HM},N \setminus \{j\}}(\varphi(v), v)(T \cup \{i\})-R^{\text{HM},N \setminus \{j\}}(\varphi(v), v)(T)=
\begin{cases}
v(\{i,j\})-\varphi_j(v|_{\{i,j\}}) &\text{if} \ T=\{j\},\\
\left( v(T \cup \{i,j\})-v(T \cup \{j\}) \right)\\~~~~~~~~~-\left(\varphi_j(v|_{T \cup \{i,j\}})
-\varphi_j(v|_{T \cup \{j\}}) \right) &\text{if} \ T \neq \{j\}.
\end{cases}
\]
If $T=\{j\}$, since player $i$ is null in $v$ and $\varphi$ satisfies (MR), we have 
$$
v(\{i,j\})-\varphi_j(v|_{\{i,j\}})=v(\{j\})-\varphi_j(v|_{\{j\}})=v(\{j\})-v(\{j\})=0.
$$
Similarly, if $T \neq \{j\}$, 
$$
\left( v(T \cup \{i,j\})-v(T \cup \{j\}) \right)-\left( \varphi_j(v|_{T \cup \{i, j\}})-\varphi_j(v|_{T \cup \{j\}}) \right)=0-\left( \varphi_j(v|_{T \cup \{j\}})-\varphi_j(v|_{T \cup \{j\}}) \right)=0.
$$
Thus, player $i$ is null in $R^{\text{HM}, N\setminus \{j\}}(\varphi(v), v)$.
\end{proof}

\medskip

\begin{lemma}
The following statements hold.

\vspace{-0.3cm}

\begin{itemize}\itemsep-0.1cm
\item[(i)] If $\varphi$ satisfies (E), (EG), and (HM-NGC), then $\varphi = Sh$. 
\item[(ii)] If $\varphi$ satisfies (E), (EG), and (F-NGC), then $\varphi = CIS$.
\item[(iii)] If $\varphi$ satisfies (E), (EG), and (M-NGC), then $\varphi = ENSC$.  
\end{itemize}
\end{lemma}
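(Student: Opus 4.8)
The plan is to prove each part by showing that the three axioms pin the value down on enough games, exploiting a crucial asymmetry between the three reduced games: the F- and M-reduced games turn out to be independent of the solution, whereas the HM-reduced game is not.

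For all three parts I would first record the consequence of Lemma~\ref{COV_EG_MG}: (E) and (EG) give (MR), so for any solution $\varphi$ satisfying them, $\varphi_j(v|_{\{j\}})=v(\{j\})$ and $\varphi_j(v^*|_{\{j\}})=v(N)-v(N\setminus\{j\})$, since $v|_{\{j\}}=v(\{j\})u_{\{j\}}$ and $v^*|_{\{j\}}=(v(N)-v(N\setminus\{j\}))u_{\{j\}}$ each have a single non-null player. Consequently $R^{\text{F},S}(\varphi(v),v)$ and $R^{\text{M},S}(\varphi(v),v)$ do not depend on $\varphi$; they are functions of $v$ alone. This is the key simplification behind parts (II) and (III).

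For part (II) (and symmetrically (III)) I would observe that $R^{\text{F},\{i,k\}}(\varphi(v),v)$ is a game in which every player outside $\{i,k\}$ is null, and, being $\varphi$-independent, is the same game $w$ for $\varphi$ and for $CIS$. Hence (F-NGC) applied to both gives $\varphi_i(v)=\varphi_i(w)$ and $CIS_i(v)=CIS_i(w)$, reducing the claim to proving $\varphi=CIS$ on two-player-essential games (on the two active coordinates). On such a $w$ with active players $\{i,k\}$, write $c$ for the common payoff of the null players: (a) reducing with $S=\{\ell,\ell'\}$ for two null players, the reduced game is a scalar multiple of $u_{\{\ell,\ell'\}}$, on which (EG) forces $\varphi_\ell(w)=\varphi_{\ell'}(w)=:c$; (b) reducing with $S=\{i,\ell\}$ produces $w^{(i\ell)}:=R^{\text{F},\{i,\ell\}}(\varphi(w),w)$ with $w^{(i\ell)}(\{i\})=w(\{i\})$, so that (EG) in $w^{(i\ell)}$ together with the consistency equalities $\varphi_i(w)=\varphi_i(w^{(i\ell)})$ and $\varphi_\ell(w)=\varphi_\ell(w^{(i\ell)})=c$ yields $\varphi_i(w)=w(\{i\})+c$; and (c) (E) then fixes $c=\tfrac1n(w(N)-w(\{i\})-w(\{k\}))$, i.e.\ $\varphi=CIS$. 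Part (III) is identical after replacing each singleton worth $w(\{i\})$ by the marginal $w(N)-w(N\setminus\{i\})$ and using $R^{\text{M}}$ in place of $R^{\text{F}}$.

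For part (I) the reduced game $R^{\text{HM},S}(\varphi(v),v)$ genuinely depends on $\varphi$ through the terms $\varphi_j(v|_{T\cup(N\setminus S)})$, so the decoupling above fails and I would instead induct on the number $m$ of non-null players. The base $m\le1$ follows from (MR) and (EG). For the step, pick a non-null $j$ and reduce with $S=N\setminus\{j\}$; by Lemma~\ref{null player set HM} the reduced game $w$ has at most $m-1$ non-null players, so the inductive hypothesis gives $\varphi_i(v)=\varphi_i(w)=Sh_i(w)$ for $i\neq j$. Comparing $w$ with $w'=R^{\text{HM},S}(Sh(v),v)$ and applying the inductive hypothesis to the auxiliary games $v|_{T\cup\{j\}}$ (which have fewer than $m$ non-null players unless $v|_{T\cup\{j\}}=v$), one obtains $w=w'+c\,u_{A_0}$ with $A_0=(N\setminus\mathrm{Null}(v))\setminus\{j\}$ and $c=Sh_j(v)-\varphi_j(v)$; linearity of $Sh$ then gives $\varphi_i(v)=Sh_i(v)$ for null $i$ and $\varphi_i(v)-Sh_i(v)=c/(m-1)$ for non-null $i\neq j$. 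I expect the main obstacle to be closing this last gap, namely $c=0$: for $m\ge3$ the symmetric relations obtained from two distinct reference players $j,j'$ force all differences to vanish, while for $m=2$ the game $v$ is itself two-player-essential, so applying (EG) directly to $v$ supplies the one missing equation. This $\varphi$-dependence of the HM-reduced game, absent in (II) and (III), is the crux of the argument.
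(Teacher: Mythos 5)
Your proposal is correct, and it diverges from the paper's proof in instructive ways --- in opposite directions for the two halves. For parts (II) and (III) the paper is much more direct: for an \emph{arbitrary} game $v$ and an arbitrary pair $i,j$, every player outside $\{i,j\}$ is null in $R^{\text{F},\{i,j\}}(\varphi(v),v)$ (resp.\ $R^{\text{M},\{i,j\}}(\varphi(v),v)$), and the singleton worths of this reduced game are $v(\{i\})$ and $v(\{j\})$ (resp.\ $v(N\setminus\{j\})$ and $v(N\setminus\{i\})$ up to a common constant), so (I-NGC) plus (EG) give $\varphi_i(v)-\varphi_j(v)=v(\{i\})-v(\{j\})$ (resp.\ $v(N\setminus\{j\})-v(N\setminus\{i\})$) in one line, and (E) finishes. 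Your detour through two-player-essential games $w$, with further reductions inside $w$, is valid (including the $n=3$ case, where step (a) is vacuous), but it is longer and additionally invokes the necessity direction --- that $CIS$ and $ENSC$ satisfy their own NGC axioms --- to transfer $CIS_i(w)=CIS_i(v)$ back to the original game, which the paper's sufficiency argument never needs.

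For part (I), by contrast, your extra care is a genuine improvement on the paper's own argument. The paper inducts on the number of null players and, in the inductive step, asserts that $\varphi_j(v)=\varphi_j\bigl(R^{\text{HM},N\setminus\{i\}}(\varphi(v),v)\bigr)$ is ``uniquely determined by the induction hypothesis,'' after which (E) pins down $\varphi_i(v)$. But, exactly as you observe, the reduced game itself contains the unknown $\varphi_i(v)$: for every $T\supseteq A_0:=(N\setminus\mathrm{Null}(v))\setminus\{i\}$ one has $v|_{T\cup\{i\}}=v$, so the reduced game equals a determined game $\tilde R$ minus $\varphi_i(v)\,u_{A_0}$. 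The induction hypothesis then gives $\varphi_j(v)=Sh_j(\tilde R)-\varphi_i(v)\,Sh_j(u_{A_0})$ for $j\neq i$, and since $\sum_{j\in A_0}Sh_j(u_{A_0})=1$, the unknown $\varphi_i(v)$ cancels identically from the efficiency equation $\sum_{j\in N}\varphi_j(v)=v(N)$: a single reduction plus (E) cannot determine $\varphi_i(v)$, so the paper's step is incomplete as written. Your decomposition $w=w'+c\,u_{A_0}$, the use of two distinct reference players when $m\ge 3$ (giving $c=c/(m-1)^2$, hence $c=0$), and the appeal to (EG) on $v$ itself when $m=2$ are precisely what is needed to close this gap; your base case via (MR) and (EG) is also fine. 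In short: (II)/(III) are correct but needlessly indirect relative to the paper, while your (I) is correct and more rigorous than the published proof.
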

\begin{proof}
\noindent (i) Take any $v \in \mathcal{V}$ and $i,j \in N$.
By construction, any $k \neq i,j$ is a null player in $R^{\text{HM},\{i,j\}}\left(\varphi(v),v\right)$.
Hence, by (HM-NGC), (EG) and (E), we have  
\begin{eqnarray*}
\varphi_i(v)-\varphi_j(v)&=&\varphi_i\left(R^{\text{HM},\{i,j\}}(\varphi(v),v)\right)-\varphi_j\left(R^{\text{HM},\{i,j\}}(\varphi(v),v)\right)\\
&=&R^{\text{HM},\{i,j\}}(\varphi(v),v)(\{i\})-R^{\text{HM},\{i,j\}}(\varphi(v),v)(\{j\})\\
&=&\left(v(N\setminus \{j\})-\sum_{k \neq i,j}\varphi_k(v|_{N\setminus \{j\}})\right)-\left( v(N\setminus \{i\})-\sum_{k \neq i,j}\varphi_k(v|_{N\setminus \{i\}})\right)\\
&=&\left( \varphi_i(v|_{N\setminus \{j\}})+\varphi_j(v|_{N\setminus \{j\}})\right)-\left( \varphi_i(v|_{N\setminus \{i\}})+\varphi_j(v|_{N\setminus \{i\}}) \right).
\end{eqnarray*}
Note that $|\text{Null}(v|_{N\setminus \{k\}})| \ge |\text{Null}(v)|$ where $k=i,j$.
Therefore, we show the uniqueness of the solution by induction on the number of null players in $v$.
We divide the proof into three cases.

\noindent Case 1: $|\text{Null}(v)| \ge n-1$.
In this case, $\varphi$ is uniquely determined by (E), (EG), and (MR).\\
Case 2: $|\text{Null}(v)|=n-2$. For $\{i,j\}=N\setminus \text{Null}(v)$, by (EG), $\varphi_i(v)-\varphi_j(v)=v(\{i\})-v(\{j\})$.
Moreover, by (HM-NGC) and (MR), for any $k \in \text{Null}(v)$,
$$
\varphi_i(v)=\varphi_i\left(R^{\text{HM},\{i,k\}}(\varphi(v),v)\right)=R^{\text{HM},\{i,k\}}(\varphi(v),v)(N)=v(N)-\sum_{h \neq i,k}\varphi_h(v|_N)=v(\{i,j\})-\varphi_j(v).
$$ 
Therefore, $\varphi_i(v)$ and $\varphi_j(v)$ are uniquely determined.

\noindent\textbf{Induction hypothesis}: Suppose that $\varphi$ is uniquely determined for any $v\in \mathcal{V}$ with $|\text{Null}(v)| \ge k$, where $k \in \{1, \ldots, n\}$.
\\
Case 3: $|\text{Null}(v)|=k-1$.
Take any $i \in N\setminus \text{Null}(v)$ and consider $R^{\text{HM},N\setminus\{i\}}(\varphi(v),v)$.
By Lemma \ref{null player set HM}, $|\text{Null} (R^{\text{HM},N\setminus\{i\}}(\varphi(v),v))| \ge k$.
Since $\varphi$ satisfies (HM-NGC), $\varphi_j(v)=\varphi_j\left(R^{\text{HM},N\setminus\{i\}}(\varphi(v),v)\right)$ for any $j \neq i$. 
Moreover, by the induction hypothesis, the right-hand side is uniquely determined.
Therefore, by (E), $\varphi_i(v)$ is also uniquely determined, and hence, $\varphi$ is uniquely determined for $v$ with $|\text{Null}(v)|=k-1$.

\medskip

\noindent (ii) Take any $v \in \mathcal{V}$ and $i,j \in N$.
By construction, any $k \neq i,j$ is a null player in $R^{F,\{i,j\}}(\varphi(v),v)$.
Hence, by (F-NGC) and (EG), we have 
\begin{align*}
\varphi_i(v)-\varphi_j(v)
&=\varphi_i\left(R^{\text{F},\{i,j\}}(\varphi(v),v)\right)-\varphi_j\left(R^{\text{F},\{i,j\}}(\varphi(v),v)\right)\\
&=R^{\text{F},\{i,j\}}(\varphi(v),v)(\{i\})-R^{\text{F},\{i,j\}}(\varphi(v),v)(\{j\})\\
&=v(\{i\})-v(\{j\})\\
&=CIS_i(v)-CIS_j(v).
\end{align*}
Since $i,j \in N$ are arbitrarily, by (E), we conclude that $\varphi_i(v)=CIS_i(v)$.

\medskip

\noindent (iii) Take any $v \in \mathcal{V}$ and $i,j \in N$.
By construction, any $k \neq i,j$ is a null player in $R^{\text{M},\{i,j\}}(\varphi(v),v)$.
Hence, by (M-NGC) and (EG), we have 
\begin{align*}
\varphi_i(v)-\varphi_j(v)
&=\varphi_i\left(R^{\text{M},\{i,j\}}(\varphi(v),v)\right)-\varphi_j\left(R^{\text{M},\{i,j\}}(\varphi(v),v)\right)\\
&=R^{\text{M},\{i,j\}}(\varphi(v),v)(\{i\})-R^{\text{M},\{i,j\}}(\varphi(v),v)(\{j\})\\
&=v(N\setminus \{j\})-v(N \setminus \{i\})\\
&=ENSC_i(v)-ENSC_j(v).
\end{align*}
Since $i,j \in N$ are arbitrarily, by (E), we conclude that $\varphi_i(v)=ENSC_i(v)$.

\end{proof}

\section{Another Characterization Based on Active-player Consistency}\label{appendix:discussion_ac}

In Theorem \ref{consistency ELS}, we examined the implications of (AC) within the class of ELS values. However, the full strength of (L) is not needed for that characterization. In fact, the affine combinations of $(\psi^s)_{s=1}^{n-1}$ can also be characterized by a weaker axiom. Pick two players $i$ and $j$. For any group of the remaining players, we can ask how much that group gains from working with $i$ instead of $j$. Our next axiom, \emph{two-person linear bargaining} (TLB), considers the situation where only $i$ and $j$ are treated as active and all other players are kept at the payoffs they are already assigned by the solution in the original game. (TLB) requires that, in this situation, the difference between the payoffs of $i$ and $j$ is determined in a linear way by those comparisons: each comparison between $i$'s and $j$'s contributions is given a fixed weight, and the weighted sum pins down $i$'s payoff minus $j$'s payoff.

\medskip

\noindent\textbf{Two-Person Linear Bargaining (TLB)}: For any $v \in \mathcal{V}$ and $i,j \in N$,
$$
\varphi_i\left(R^{\text{AC},\{i,j\}}(\varphi(v),v)\right) - \varphi_j\left(R^{\text{AC},\{i,j\}}(\varphi(v),v)\right) = \gamma^{i,j}\left(\left(v(S\cup \{i\})-v(S \cup \{j\})\right)_{S \subseteq N\setminus \{i,j\}}\right),
$$
where $\gamma^{i,j}: \mathbb{R}^{2^{n-2}} \to \mathbb{R}$ is a linear function.

\medskip

(TLB) can be viewed as a rule for determining the payoffs between two players once the rest of the players have effectively been settled. Suppose only players $i$ and $j$ remain to be compared, and all other players are held at the payoffs already assigned to them by the solution. In that case, (TLB) requires that the difference between $i$'s and $j$'s payoffs be determined in a disciplined way: it must be a fixed linear summary of how much more valuable it is to include $i$ rather than $j$ in any possible partnership with the others. Thus, the solution treats the relative payoff of $i$ versus $j$ as an assigned payoff differential based purely on their comparative contributions, with no additional bargaining terms.

The special cases of the solution satisfying this property as well as (E) and (SYM) are illustrated as follows:

\vspace{-0.3cm} 

\begin{itemize}\itemsep-0.1cm
\item Suppose that for any $v \in \mathcal{V}$ and $i,j \in N$, 
$$
\varphi_i\left(R^{\text{AC},\{i,j\}}(\varphi(v),v)\right) - \varphi_j\left(R^{\text{AC},\{i,j\}}(\varphi(v),v)\right)=v(\{i\})-v(\{j\}).
$$
Then, $\varphi = CIS$.

\item Suppose that for any $v \in \mathcal{V}$ and $i,j \in N$, 
$$\varphi_i\left(R^{\text{AC},\{i,j\}}(\varphi(v),v)\right) - \varphi_j\left(R^{\text{AC},\{i,j\}}(\varphi(v),v)\right)=v(N\setminus \{j\})-v(N\setminus \{i\}).
$$
Then, $\varphi = ENSC$. 
\item Suppose that for any $v \in \mathcal{V}$ and $i,j \in N$, 
$$
\varphi_i\left(R^{\text{AC},\{i,j\}}(\varphi(v),v)\right) - \varphi_j\left(R^{\text{AC},\{i,j\}}(\varphi(v),v)\right)=\sum_{T \subseteq N\setminus \{i,j\}}\frac{|T|!(n-|T|-1)!}{n!}(v(T\cup \{i\})-v(T \cup \{j\})).
$$
Then, $\varphi = Sh$. 
\end{itemize}

\begin{lemma}\label{lem:etlbac_linear}
The following implications hold. 
\begin{itemize}\itemsep-0.1cm
\item[(i)] (L) and (SYM) together imply (TLB). 
\item[(ii)] (E), (TLB), and (AC) together imply (L). 
\end{itemize}
\end{lemma}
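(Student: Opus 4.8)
The plan is to handle the two implications separately, since they run in opposite directions and draw on different ideas.

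For (i), I would start from the representation in Lemma~\ref{Linear}(i): because $\varphi$ satisfies (L), each coordinate is $\varphi_k(v)=\sum_{S}p_k(S)v(S)$. The reduced game $R^{\text{AC},\{i,j\}}(\varphi(v),v)$ is built from $v$ and from $\varphi(v)$, both linear in $v$, so it is linear in $v$; applying the linear map $\varphi$ then shows that the left-hand side of (TLB), namely $L_{ij}(v):=\varphi_i(R^{\text{AC},\{i,j\}}(\varphi(v),v))-\varphi_j(R^{\text{AC},\{i,j\}}(\varphi(v),v))$, is a linear functional of $v$, say $L_{ij}(v)=\sum_S c(S)v(S)$. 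The content of (TLB) is precisely that $c(S)$ vanishes unless $S$ contains exactly one of $i,j$, together with $c(T\cup\{i\})=-c(T\cup\{j\})$ for every $T\subseteq N\setminus\{i,j\}$; these two conditions are exactly what turns $L_{ij}$ into a linear function of the differences $(v(T\cup\{i\})-v(T\cup\{j\}))_T$, with $\gamma^{i,j}$ the map whose $T$-th coefficient is $c(T\cup\{i\})$.

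To extract these constraints I would invoke (SYM) through the transposition $\pi=(i\,j)$. The key step is that the reduction operator commutes with $\pi$: using $\pi T\setminus\{i,j\}=T\setminus\{i,j\}$ and $\varphi_k(\pi v)=\varphi_k(v)$ for $k\notin\{i,j\}$, one verifies $\pi\bigl(R^{\text{AC},\{i,j\}}(\varphi(v),v)\bigr)=R^{\text{AC},\{i,j\}}(\varphi(\pi v),\pi v)$. Combining this with (SYM) applied to $\varphi_i$ and $\varphi_j$, which exchanges their roles and hence flips the sign of the difference, yields $L_{ij}(v)=-L_{ij}(\pi v)$ for all $v$. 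Expanding both sides in the linear representation and matching coefficients gives $c(S)=-c(\pi S)$. For $S$ fixed by $\pi$ (that is, $S\supseteq\{i,j\}$ or $S\cap\{i,j\}=\emptyset$) this forces $c(S)=0$, and for the remaining $S$ it delivers the antisymmetry $c(T\cup\{i\})=-c(T\cup\{j\})$, completing (i).

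For (ii), the crucial observation is that (AC) eliminates the reduced-game expressions from (TLB). Applying (AC) with $S=\{i,j\}$ gives $\varphi_i(v)=\varphi_i(R^{\text{AC},\{i,j\}}(\varphi(v),v))$ and the analogous identity for $j$, so (TLB) becomes $\varphi_i(v)-\varphi_j(v)=\gamma^{i,j}((v(T\cup\{i\})-v(T\cup\{j\}))_T)$, which is linear in $v$. Thus every pairwise difference $\varphi_i-\varphi_j$ is a linear functional of $v$. Efficiency then closes the argument: summing over $j\in N$ yields $n\varphi_i(v)=\sum_{j\in N}(\varphi_i(v)-\varphi_j(v))+v(N)$, exhibiting $\varphi_i$ as a sum of linear functionals of $v$, hence itself linear, which is (L).

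The main obstacle I anticipate is the bookkeeping in (i): carefully verifying that the reduction operator commutes with the transposition $\pi=(i\,j)$, and then tracking the sign produced when (SYM) swaps $\varphi_i$ and $\varphi_j$, so as to land on the exact antisymmetry $c(S)=-c(\pi S)$ rather than a spurious variant. Part (ii) is comparatively short, since the only real idea is recognizing that (AC) collapses the reduced-game terms back to the original payoffs, after which (TLB) and (E) combine mechanically.
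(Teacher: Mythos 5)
Your proposal is correct, and it splits neatly into a part that mirrors the paper and a part that does not. For implication (ii) your argument is essentially the paper's own: apply (AC) with $S=\{i,j\}$ to collapse the reduced-game terms in (TLB), obtaining $\varphi_i(v)-\varphi_j(v)=\gamma^{i,j}\bigl((v(T\cup\{i\})-v(T\cup\{j\}))_{T\subseteq N\setminus\{i,j\}}\bigr)$, then sum over $j\in N$ and use (E) to express $\varphi_i(v)$ as $\frac{1}{n}$ times a sum of linear functionals of $v$, hence linear.

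For implication (i), however, you take a genuinely different route. The paper dismisses it as trivial via Lemma~\ref{Linear}(ii): writing $\varphi_i(v)=\sum_{S\ni i}p_{|S|}v(S)+\sum_{S\not\ni i}q_{|S|}v(S)$, all terms with $S\supseteq\{i,j\}$ or $S\cap\{i,j\}=\emptyset$ cancel in the difference $\varphi_i(w)-\varphi_j(w)$ for $w=R^{\text{AC},\{i,j\}}(\varphi(v),v)$, and on the remaining sets the outsiders' fixed payoffs cancel pairwise, since $w(T\cup\{i\})-w(T\cup\{j\})=v(T\cup\{i\})-v(T\cup\{j\})$; this exhibits $\gamma^{i,j}$ explicitly with coefficients $p_{|T|+1}-q_{|T|+1}$. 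You instead use only the unrestricted representation of Lemma~\ref{Linear}(i), note that $L_{ij}$ is a linear functional $\sum_S c(S)v(S)$, and extract the structure of $c$ from equivariance under the transposition $\pi=(i\,j)$: your commutation identity $\pi\bigl(R^{\text{AC},\{i,j\}}(\varphi(v),v)\bigr)=R^{\text{AC},\{i,j\}}(\varphi(\pi v),\pi v)$ is valid (it uses $\pi T\setminus\{i,j\}=T\setminus\{i,j\}$ and $\varphi_k(\pi v)=\varphi_k(v)$ for $k\notin\{i,j\}$, both correct), and together with (SYM) it gives $L_{ij}(\pi v)=-L_{ij}(v)$, hence $c(S)=-c(\pi S)$, which annihilates the coefficients on $\pi$-fixed sets and yields the required antisymmetry. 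Both arguments are sound. The paper's route buys brevity and an explicit formula for $\gamma^{i,j}$ in terms of the coefficients $p_s,q_s$; yours buys structural clarity, since it never needs the symmetric coefficient representation --- (SYM) enters only through a single transposition --- at the cost of the commutation bookkeeping and the coefficient-matching step you correctly anticipated as the delicate part.
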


\begin{proof}
By Lemma \ref{Linear} (ii), it is trivial to verify the first implication. 
To show the second implication, let $v \in \mathcal{V}$. By (TLB) and (AC), for any $i,j \in N$, 
\[
\varphi_i(v)-\varphi_j(v)=\varphi_i\left(R^{\text{AC},\{i,j\}}(\varphi(v),v)\right)-\varphi_j\left(R^{\text{AC},\{i,j\}}(\varphi(v),v)\right)=\gamma^{i,j}\left( (v(S\cup \{i\})-v(S \cup \{j\})_{S \subseteq N\setminus \{i,j\}}  \right). 
\]
Moreover, by (E), $\sum_{i \in N}\varphi_i(v)=v(N)$.
By these equations, for any $i \in N$, $\varphi_i(v)$ is uniquely determined as
\[
\varphi_i(v)=\frac{1}{n} \left(v(N)-\sum_{j \neq i} \gamma^{j,i}\left( (v(S\cup \{j\})-v(S \cup \{i\}))_{S \subseteq N\setminus \{i,j\}}  \right)  \right).
\]
Since $\gamma^{j,i}$ is a linear function for any $i,j \in N$, so is $\varphi$.
\end{proof}

The second implication of Lemma~\ref{lem:etlbac_linear} fails without (TLB). 
To see this, consider the example in Appendix~\ref{Appendix: AC}, which satisfies (E) and (AC), but violates (TLB) and (L). 
Together with Theorem~\ref{consistency ELS}, the above argument characterizes the subclass of ELS values satisfying \textup{(AC)}.

\begin{corollary}\label{consistency ELS2}
Suppose that $\varphi$ satisfies (E), (TLB), and (SYM). Then the following are equivalent:

\vspace{-0.3cm}

\begin{itemize}\itemsep-0.1cm
\item[(I)] $\varphi$ satisfies (AC).
\item[(II)] $\varphi$ is an affine combination of $(\psi^s)_{s=1}^{n-1}$.
\end{itemize}
\end{corollary}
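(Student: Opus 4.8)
The plan is to treat the corollary as a reduction to Theorem~\ref{consistency ELS}. That theorem already establishes the equivalence between (AC) and being an affine combination of $(\psi^s)_{s=1}^{n-1}$ for ELS values, so the only gap to close is that the standing hypotheses here, namely (E), (TLB), and (SYM), weaken (L) to (TLB). The key is that Lemma~\ref{lem:etlbac_linear} lets me pass between (TLB) and (L) in precisely the direction each implication requires, so no computation beyond the lemma and the theorem should be needed.

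For (I)~$\Rightarrow$~(II), I would start from a $\varphi$ satisfying (E), (TLB), (SYM), and (AC). By Lemma~\ref{lem:etlbac_linear}(ii), the combination of (E), (TLB), and (AC) yields (L). Hence $\varphi$ satisfies (E), (L), and (SYM); that is, $\varphi$ is an ELS value that also satisfies (AC). Applying the (I)~$\Rightarrow$~(II) implication of Theorem~\ref{consistency ELS} then shows that $\varphi$ is an affine combination of $(\psi^s)_{s=1}^{n-1}$.

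For (II)~$\Rightarrow$~(I), I would first observe that each $\psi^s$ is itself an ELS value and that an affine combination of ELS values with coefficients summing to one is again an ELS value (efficiency is preserved because the coefficients sum to one, while linearity and symmetry are preserved termwise). Thus $\varphi$ is automatically an ELS value, which is consistent with the standing hypotheses, since (L) and (SYM) already deliver (TLB) by Lemma~\ref{lem:etlbac_linear}(i). The (II)~$\Rightarrow$~(I) implication of Theorem~\ref{consistency ELS} then yields (AC) directly, so this direction requires no appeal to the fixed-point argument.

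The substantive content is therefore concentrated entirely in Lemma~\ref{lem:etlbac_linear}(ii), the claim that (E), (TLB), and (AC) force (L); this is the step I expect to be the genuine obstacle in the overall development, even though within the corollary we are entitled to assume it. Intuitively, (TLB) fixes the payoff differential of each active pair as a linear functional of its pairwise contribution vector, and one must argue that imposing this pairwise discipline across all pairs, together with efficiency and the self-referential fixed-point structure built into (AC), pins down each $\varphi_i(v)$ as a linear function of $v$. Once that lemma is in hand, the corollary itself reduces to the two short chains of implications above.
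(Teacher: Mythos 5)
Your proposal is correct and matches the paper's own (implicit) argument: the paper derives the corollary exactly by combining Lemma~\ref{lem:etlbac_linear}(ii) (to recover (L) from (E), (TLB), (AC) and thus reduce to the ELS setting) with Theorem~\ref{consistency ELS}, while Lemma~\ref{lem:etlbac_linear}(i) ensures the converse direction is consistent with the standing hypotheses. No gaps; your identification of Lemma~\ref{lem:etlbac_linear}(ii) as the only substantive input is exactly right.
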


\section{Independence of the axioms}\label{appendix:independence}

\subsection{
(CU), (CD$_{\text{I}}$), and (CD$_{\text{O}}$)}

We provide several examples illustrating the independence of the axioms (CU), (CD$_{\text{I}}$), and (CD$_{\text{O}}$).

\begin{example}
\label{ex:CU}
Let
\begin{align*}
A^{1}(v) &:= \left\{ i \in N \mid v(\{i\}) \geq v(\{j\}) \ \forall j \in N \right\}, \\
A^{2}(v) &:= \left\{ i \in N \backslash A^{1}(v) \mid v(\{i\}) \geq v(\{j\}) \ \forall j \in N \backslash A^{1}(v) \right\},
\end{align*}
denote the sets of players with the highest and second-highest stand-alone worth, respectively. Define
\[
\varphi_k(v)=
\begin{cases}
\dfrac{v(N)}{n} + \dfrac{v(\{i\}) - v(\{j\})}{2} 
& \text{if } A^{1}(v)=\{i\},\ A^{2}(v)=\{j\},\ k=i, \\[4pt]
\dfrac{v(N)}{n} - \dfrac{v(\{i\}) - v(\{j\})}{2} 
& \text{if } A^{1}(v)=\{i\},\ A^{2}(v)=\{j\},\ k=j, \\[4pt]
\dfrac{v(N)}{n} 
& \text{otherwise.}
\end{cases}
\]
We show that $\varphi$ satisfies (CU), but neither (CD$_{\text{I}}$) nor (CD$_{\text{O}}$).

\medskip

\noindent (i) Suppose $A^{1}(v)=\{i\}$ and $A^{2}(v)=\{j\}$. By definition, $v(\{i\}) > v(\{j\})$. By construction of $\varphi$, we have
\[
v(\{i\}) - \varphi_i(v^{t}) = v(\{j\}) - \varphi_j(v^{t}) > v(\{k\}) - \varphi_k(v^{t}) \quad \text{for each } k \neq i,j.
\]
Hence, $A^{1}(U(\varphi(v^{t}),v)) = \{i,j\}$, and thus $\varphi(U(\varphi(v^{t}),v))$ is the egalitarian value.

On the other hand, again by construction of $\varphi$, the difference $\varphi(v) - \varphi(v^{t})$ is the egalitarian value, since the two allocations differ only in the grand coalition worth. Therefore, (CU) is satisfied.

\noindent (ii) In all other cases, $\varphi(v)$, $\varphi(v^{t})$, and $\varphi(U(\varphi(v^{t}),v))$ all coincide with the egalitarian value. Hence, (CU) is satisfied.

\medskip

We next show that $\varphi$ violates both (CD$_{\text{I}}$) and (CD$_{\text{O}}$). Suppose $A^{1}(v)=\{i\}$ and $A^{2}(v)=\{j\}$. By construction of $\varphi$, $\varphi_k(v^{t})$ attains its highest value when $k=i$, and its second-highest value when $k \neq i,j$. Hence, for both $D_{I}(\varphi(v^{t}),v)$ and $D_{O}(\varphi(v^{t}),v)$, the stand-alone worth is maximized at $k=i$, and is second-highest for all $k \neq i,j$. Therefore, both $\varphi(D_{I}(\varphi(v^{t}),v))$ and $\varphi(D_{O}(\varphi(v^{t}),v))$ are the egalitarian value. Since $\varphi(v)$ is not egalitarian, both (CD$_{\text{I}}$) and (CD$_{\text{O}}$) are violated.
\end{example}

\begin{example}
\label{ex:CDI}
We provide a solution $\varphi$ that satisfies (CD$_{\text{I}}$), but neither (CU) nor (CD$_{\text{O}}$). Let $A_{+}(v) := \{ i \in N \mid v(\{i\}) > 0 \}$ denote the set of players whose stand-alone worth is strictly positive. Let
\begin{equation}
\label{eq:def_phi_CDI}
\varphi_i(v)=
\begin{cases}
|v(N)| + 1 
& \text{if } A_{+}(v)=\{j\} \text{ and } i=j, \\[4pt]
\dfrac{v(N) - |v(N)| - 1}{n-1} 
& \text{if } A_{+}(v)=\{j\} \text{ and } i \neq j, \\[6pt]
\dfrac{v(N)}{n} 
& \text{otherwise.}
\end{cases}
\end{equation}

\medskip

We show that (CD$_{\text{I}}$) is satisfied.

\medskip

\noindent (i) Suppose $A_{+}(v)=\{j\}$. Then, for any $t \in \mathbb{R}$,
\begin{equation}
\varphi_i(v^{t})=
\begin{cases}
|t| + 1 > 0 & \text{if } i=j, \\[4pt]
\frac{t - |t| - 1}{n-1} < 0 & \text{if } i \neq j.
\end{cases}
\label{eq:CDI_phi_i}
\end{equation}
By definition of $D_{I}$, $D_{I}(\varphi(v^{t}),v)(\{i\})=\varphi_i(v^{t})$ for each $i \in N$, and thus $A_{+}(D_{I})=\{j\}$.\footnote{We refer to the reduced game $D_{I}(\varphi(v^{t}),v)$ by $D_{I}$ for short when there is no risk of confusion. The same convention applies to $D_{O}$ and $U$ below.}
Hence, $\varphi(D_{I})=\varphi(v)$, by construction of $\varphi$ given in (\ref{eq:def_phi_CDI}). (CD$_{\text{I}}$) is satisfied.

\medskip

\noindent (ii) In all other cases, both $\varphi(v)$ and $\varphi(v^{t})$ are egalitarian. Hence, $|A_{+}(D_{I})|$ is either $0$ (if $t \le 0$) or $n$ (if $t > 0$). Thus, both $\varphi(v)$ and $\varphi(D_{I})$ are egalitarian. Therefore, (CD$_{\text{I}}$) is satisfied.

\medskip

However, $\varphi$ does not satisfy (CD$_{\text{O}}$). To see this, consider a game $v$ such that $v(N)>0$, $v(\{j\})=1$, and $v(\{i\})=0$ for each $i \neq j$. Then $A_{+}(v)=\{j\}$, and (\ref{eq:CDI_phi_i}) applies. By definition of $D_{O}$,
$
D_{O}(\{i\}) = v(N) - t + \varphi_i(v^{t})$ \text{for each } $i \in N$.
By choosing $v(N)-t$ sufficiently large (e.g., $v(N)=100$ and $t=1$), we obtain $D_{O}(\{i\})>0$ for all $i \in N$. Hence, $A_{+}(D_{O})=N$, and thus $\varphi(D_{O})$ is egalitarian, whereas $\varphi_j(v)=|v(N)|+1$. Therefore, (CD$_{\text{O}}$) is violated.

\medskip

Moreover, $\varphi$ does not satisfy (CU). To see this, consider a game $v$ such that $v(N)=v(\{j\})>2$ and $v(\{i\})=0$ for each $i \neq j$. Then $A_{+}(v)=\{j\}$, and (\ref{eq:CDI_phi_i}) applies. Take $t \in (0,1)$. Then,
$
U(\varphi(v^{t}),v)(\{j\}) = v(\{j\}) - \varphi_j(v^{t}) 
= v(\{j\}) - (|t|+1) > 0,
$
and for each $i \neq j$,
$
U(\{i\}) = v(\{i\}) - \varphi_i(v^{t}) = -\varphi_i(v^{t}) > 0.
$
Hence, $A_{+}(U)=N$, and thus $\varphi(U)$ is egalitarian. However,
$
\varphi_j(v)=|v(N)|+1$,
$
\varphi_j(v^{t})=|t|+1,$
and their difference does not coincide with the egalitarian value $\varphi_j(U)=\frac{v(N)-t}{n}$. Therefore, (CU) is violated.
\end{example}

\begin{example}
\label{ex:bothCD}Let 
\[
\varphi _{i}(v)=\left\{ 
\begin{array}{ll}
\frac{v(N)}{n}+n-1 & \text{ if }A^{1}\left( v\right) =\left\{ j\right\} 
\text{ and }i=j, \\ 
\frac{v(N)}{n}-1 & \text{ if }A^{1}\left( v\right) =\left\{ j\right\} \text{
and }i\neq j, \\ 
\frac{v(N)}{n} & \text{ otherwise.}%
\end{array}%
\right. 
\]%

To see that both (CD$_{\text{I}}$) and (CD$_{\text{O}}$) are satisfied, suppose $A^{1}(v)=\{j\}$. Then player $j$ attains the maximal stand-alone worth in both reduced games $D_{I}$ and $D_{O}$, as well as in $v$. If $|A^{1}(v)|\neq 1$, the allocation is egalitarian in $D_{I}$, $D_{O}$, and $v$. In either case, both (CD$_{\text{I}}$) and (CD$_{\text{O}}$) are satisfied.

To see that (CU) is violated, consider a game $v$ such that $v(\{1\})>n$ and $v(\{i\})=0$ for each $i\neq 1$. Then $A^{1}(v)=A^{1}(v^{t})=\{1\}$ and also $A^{1}(U)=\{1\}$. Hence, player $1$ receives an allocation exceeding that of the other players by $n$ in each of $\varphi(v)$, $\varphi(v^{t})$, and $\varphi(U)$, violating (CU).
\end{example}

\begin{example}
\label{ex:onlyCDO}
This example satisfies (CD$_{\text{O}}$), but violates (CU) and (CD$_{\text{I}}$). It also satisfies (SYM), but violates (E) and (L). Assume $\overline{v}>\underline{v}\geq 0$ and restrict attention to the class of games with $v(N)\in (\underline{v},\overline{v})$.\footnote{We did not find such an example on the full domain $\mathcal{V}$. Whether (CD$_{\text{O}}$) implies (CD$_{\text{I}}$) on $\mathcal{V}$ remains an open question.}

Define $\varphi$ by
\[
\varphi_i(v)=
\begin{cases}
\overline{v} & \text{if } A^{1}(v)=\{j\},\ v(\{j\})\ge \underline{v},\ i=j,\\
0 & \text{if } A^{1}(v)=\{j\},\ v(\{j\})\ge \underline{v},\ i\ne j,\\
\overline{v}+1 & \text{if } A^{1}(v)=\{j\},\ v(\{j\})< \underline{v},\ i=j,\\
\overline{v} & \text{if } A^{1}(v)=\{j\},\ v(\{j\})< \underline{v},\ i\ne j,\\
\dfrac{v(N)}{n} & \text{otherwise.}
\end{cases}
\]

{We first show that (CD$_{\text{O}}$) holds.}

(i) If $A^{1}(v)=\{j\}$ and $v(\{j\})\ge \underline{v}$, then for any $t\in(\underline{v},\overline{v})$,
\[
D_O(\varphi(v^t),v)(\{i\})=
\begin{cases}
v(N) & i=j,\\
v(N)-\overline{v} & i\ne j.
\end{cases}
\]
Hence $A^{1}(D_O)=\{j\}$ with $D_O(\{j\})\ge \underline{v}$, and $\varphi(D_O)=\varphi(v)$.

(ii) If $A^{1}(v)=\{j\}$ and $v(\{j\})<\underline{v}$, then
\[
D_O(\varphi(v^t),v)(\{i\})=
\begin{cases}
v(N)-(n-1)\overline{v} & i=j,\\
v(N)-(n-1)\overline{v}-1 & i\ne j.
\end{cases}
\]
Thus $A^{1}(D_O)=\{j\}$ and $D_O(\{j\})<\underline{v}$, implying again $\varphi(D_O)=\varphi(v)$.

(iii) Otherwise, $\varphi_i(v)=\varphi_i(v^t)=\frac{v(N)}{n}$ for all $i$, and
$
D_O(\varphi(v^t),v)(\{i\})=\frac{v(N)}{n}.
$
Hence $A^{1}(D_O)=N$ and $\varphi(D_O)=\varphi(v)$. 

We next show that (CD$_{\text{I}}$) fails. In case (ii),
\[
D_I(\varphi(v^t),v)(\{i\})=\varphi_i(v^t)=
\begin{cases}
\overline{v}+1 & i=j,\\
\overline{v} & i\ne j.
\end{cases}
\]
Thus $A^{1}(D_I)=\{j\}$ with $D_I(\{j\})\ge \underline{v}$, so
\[
\varphi_i(D_I)=
\begin{cases}
\overline{v} & i=j,\\
0 & i\ne j,
\end{cases}
\]
which differs from $\varphi(v)$. 

{We show that (CU) fails.}
Let $A^{1}(v)=\{1\}$ with $v(\{1\})>\overline{v}$. Then $\varphi_1(v)=\varphi_1(v^t)=\overline{v}$. Also, we have $U(\{1\})=v(\{1\})-\overline{v}>0$ and $U(\{i\})=0$ for $i\ne 1$. Hence, $A^{1}(U)=\{1\}$, and thus $\varphi_1(U)>0$. So $
\varphi_1(v) < \varphi_1(v^t) + \varphi_1(U),$ violating (CU).
\end{example}

\begin{table}[H]
\centering
\begin{tabular}{c|ccc|ccc}
& (E) & (L) & (SYM) & (CU) & (CD$_\text{I}$) & (CD$_\text{O}$) \\ \hline
Zero allocation & $\times$ & $\checkmark$ & $\checkmark$ & $\checkmark$ & $\checkmark$ & $\checkmark$ \\ \hline
Example \ref{ex:CU} & $\checkmark$ & $\times$ & $\checkmark$ & $\checkmark$ & $\times$ & $\times$ \\
Example \ref{ex:CDI} & $\checkmark$ & $\times$ & $\checkmark$ & $\times$ & $\checkmark$ & $\times$ \\
Example \ref{ex:bothCD} & $\checkmark$ & $\times$ & $\checkmark$ & $\times$ & $\checkmark$ & $\checkmark$ \\
Example \ref{ex:onlyCDO}$^*$ & $\times$ & $\times$ & $\checkmark$ & $\times$ & $\times$ & $\checkmark$ \\ \hline
Dictatorship & $\checkmark$ & $\checkmark$ & $\times$ & $\checkmark$ & $\checkmark$ & $\checkmark$ 
\end{tabular}
\caption{Independence of the axioms. $^*$Full domain is also dropped in Example \ref{ex:onlyCDO}.}
\label{tab:indep_axioms}
\end{table}

Finally, we conclude this section by providing examples that satisfy the composition axioms after dropping either (E) or (SYM). The independence of the axioms illustrated by these examples is summarized in Table~\ref{tab:indep_axioms}.

\noindent\textbf{Zero allocation}: For each $v \in \mathcal{V}$ and each $i \in N$, $\varphi_i(v) = 0$. 

\noindent\textbf{Dictatorship}: There is $i \in N$ such that for each $v \in \mathcal{V}$, $\varphi_i(v) = v(N)$ and  $\varphi_j(v) = 0$ for each $j \in N \backslash \{i\}$. 

\subsection{(AC) and (L)}\label{Appendix: AC}
We provide an example of the solution that satisfies (E), (SYM), (AC), but violates (L).
For each $v \in \mathcal{V}$, let 
\[
A^{1}(v) = \left\{ i \in N \mid v(\{i\}) \geq v(\{j\}) \ \forall j \in N \right\}.
\]
For any $v \in \mathcal{V}^N$, define the solution $\varphi$ as follows.
\begin{itemize}
\item[\rm{(i)}] If there exists $i \in N$ such that $A^1(v)=\{i\}$ and $v(N) \ge v(\{i\})>v(N)+\max_{j \neq i}v(\{j\})$, 
then
\[
\varphi_k(v)=
\begin{cases}
v(N) & \text{if } k=i,\\
0 & \text{otherwise}.
\end{cases}
\]

\item[\rm{(ii)}] Otherwise, $\varphi(v)=CIS(v)$.
\end{itemize}
Note that condition (i) is indeed satisfied for some games; for example, this is the case when $v(N)>0=v(\{i\})$ and $\max_{j \neq i}v(\{j\})$ is sufficiently small. 
It is easy to verify that the solution satisfies (E) and (SYM), but violates (L). 
Thus, we prove that it satisfies (AC) using the following two claims. 
Take any $v \in \mathcal{V}^N$.

\begin{claim}\label{ex: AC-claim 1}
If condition (i) applies to $v$, then condition (i) also applies to $R^S(v,\varphi(v))$ for any $S \subsetneq N$.
\end{claim}

\begin{proof}
Let $A^1(v)=\{i\}$. 
Then, by definition, $\varphi_i(v)=v(N)$ and $\varphi_j(v)=0$ for every $j \neq i$.

First, suppose that $i \in S$. 
In this case, $R^S(v,\varphi(v))(\{j\})=v(\{j\})$ for every $j \in N$, and $R^S(v,\varphi(v))(N)=v(N)$. 
Since $\varphi$ depends only on $(v(\{j\}))_{j \in N}$ and $v(N)$, condition (i) also applies to $R^S(v,\varphi(v))$.

Next, suppose that $i \notin S$. 
In this case, $R^S(v,\varphi(v))(\{i\})=v(\{i\})-v(N)$, $R^S(v,\varphi(v))(\{j\})=v(\{j\})$ for every $j \neq i$, and $R^S(v,\varphi(v))(N)=v(N)-v(N)=0$. 
Moreover, since $v(\{i\})>v(N)+\max_{j \neq i}v(\{j\})$, we have
\[
R^S(v,\varphi(v))(\{i\})
=
v(\{i\})-v(N)
>
\max_{j \neq i}v(\{j\})
=
\max_{j \neq i}R^S(v,\varphi(v))(\{j\}).
\]
Thus, $A^1(R^S(v,\varphi(v)))=\{i\}$. 
Since $R^S(v,\varphi(v))(N)=0$, the preceding inequality also implies that
\[
R^S(v,\varphi(v))(\{i\})
>
R^S(v,\varphi(v))(N)
+
\max_{j \neq i}R^S(v,\varphi(v))(\{j\}).
\]
Similarly, since $v(N) \ge v(\{i\})$, we have $R^S(v,\varphi(v))(N)=0 \ge v(\{i\})-v(N)=R^S(v,\varphi(v))(\{i\})$. 
Hence, condition (i) also applies to $R^S(v,\varphi(v))$.
\end{proof}

\begin{claim}\label{ex: AC-claim 2}
If condition (i) applies to $R^S(v,\varphi(v))$ for any $S \subsetneq N$, then condition (i) also applies to $v$.
\end{claim}

\begin{proof}
Seeking a contradiction, suppose that condition (i) does not apply to $v$. 
Then, by definition, $\varphi(v)=CIS(v)$. 
For any $S \subsetneq N$, note that
\[
R^S(v,\varphi(v))(\{j\})
=
v(\{j\})-CIS_j(v)
=
-\frac{1}{n}\left(v(N)-\sum_{k \in N}v(\{k\})\right)
\]
for every $j \notin S$. 
Hence, for $A^1(R^S(v,\varphi(v)))=\{i\}$ to hold, it is either $S=\{i\}$ or $S=N\setminus\{i\}$.

First, suppose that $S=\{i\}$. 
Then $R^S(v,\varphi(v))(\{i\})=v(\{i\})$ and 
$R^S(v,\varphi(v))(N)=v(N)-\sum_{j \in N\setminus\{i\}}CIS_j(v)=CIS_i(v)$. 
Since condition (i) applies to $R^S(v,\varphi(v))$, we have
\begin{align*}
& R^S(v,\varphi(v))(\{i\})
>
R^S(v,\varphi(v))(N)
+
\max_{j \neq i}R^S(v,\varphi(v))(\{j\}) \\
&\Leftrightarrow \quad
v(\{i\})
>
CIS_i(v)
-
\frac{1}{n}\left(v(N)-\sum_{k \in N}v(\{k\})\right) \\
&\Leftrightarrow \quad
v(\{i\})>v(\{i\}),
\end{align*}
which is a contradiction.

Thus, we consider the case $S=N\setminus\{i\}$. 
By definition, $R^S(v,\varphi(v))(\{i\})=-(1/n)(v(N)-\sum_{k \in N}v(\{k\}))$, 
$R^S(v,\varphi(v))(\{j\})=v(\{j\})$ for every $j \neq i$, and 
$R^S(v,\varphi(v))(N)=v(N)-CIS_i(v)$. 
Since condition (i) applies to $R^S(v,\varphi(v))$, we have
\begin{align*}
& R^S(v,\varphi(v))(N) \ge R^S(v,\varphi(v))(\{i\}) \\
&\Leftrightarrow \quad
v(N)-CIS_i(v)
\ge
-\frac{1}{n}\left(v(N)-\sum_{k \in N}v(\{k\})\right) \\
&\Leftrightarrow \quad
v(N)\ge v(\{i\}).
\end{align*}
Similarly,
\begin{align*}
& R^S(v,\varphi(v))(\{i\})
>
R^S(v,\varphi(v))(N)
+
\max_{j \neq i}R^S(v,\varphi(v))(\{j\}) \\
&\Leftrightarrow \quad
-\frac{1}{n}\left(v(N)-\sum_{k \in N}v(\{k\})\right)
>
v(N)-CIS_i(v)+\max_{j \neq i}v(\{j\}) \\
&\Leftrightarrow \quad
v(\{i\})
>
v(N)+\max_{j \neq i}v(\{j\}).
\end{align*}
Together with $v(N)\ge v(\{i\})$, this inequality implies that 
$v(\{i\})>\max_{j \neq i}v(\{j\})$, and hence $A^1(v)=\{i\}$. 
Therefore, condition (i) applies to $v$, which is the desired contradiction.
\end{proof}

By Claims~\ref{ex: AC-claim 1} and~\ref{ex: AC-claim 2}, condition (i) applies to $v$ if and only if it also applies to $R^S(v,\varphi(v))$ for any $S \subsetneq N$. 
By the construction of $\varphi$ and Theorem~\ref{consistency ELS}, it follows that $\varphi$ satisfies (AC).

\end{appendix}

\bibliography{references.bib}

\end{document}